\newcommand{\Geqt}{\ensuremath{G_\text{eqt}}}
\newcommand{\Veqt}{\ensuremath{V_\text{eqt}}}
\newcommand{\Eeqt}{\ensuremath{E_\text{eqt}}}
\newcommand{\E}{\mbox{E}}
\newcommand{\OPT}{\ensuremath{\text{OPT}}}
\begin{document}

\mainmatter

\title{On the Runtime of Universal Coating for Programmable Matter}

\titlerunning{Runtime of Universal Coating}

\author{Joshua J. Daymude\inst{1}\footnote[1]{Supported in part by NSF grants CCF-1353089, CCF-1422603, and  REU--026935.} \and
    Zahra Derakhshandeh\inst{1}\footnotemark[1] \and
    Robert Gmyr\inst{2}\footnote[2]{Supported in part by DFG grant SCHE 1592/3-1.} \and
    Alexandra Porter\inst{1}\footnotemark[1] \and
    Andr\'ea W.\ Richa\inst{1}\footnotemark[1]\and
    Christian Scheideler\inst{2}\footnotemark[2] \and
    Thim Strothmann\inst{2}\footnotemark[2]}

\authorrunning{Daymude, et al.}

\institute{Computer Science, CIDSE, Arizona State University, USA, \\
    \{jdaymude,zderakhs,amporte6,aricha\}@asu.edu \\
    \and
    Department of Computer Science, Paderborn University, Germany, \\
    \{gmyr,scheidel,thim\}@mail.upb.de
}

\toctitle{Lecture Notes in Computer Science}
\tocauthor{Authors' Instructions}
\maketitle

\begin{abstract}
Imagine coating buildings and bridges with smart particles (also coined smart paint) that monitor structural integrity and sense and report on traffic and wind loads, leading to technology that could do such inspection jobs faster and cheaper and increase safety at the same time.
In this paper, we study the problem of uniformly coating objects of arbitrary shape in the context of {\it self-organizing programmable matter}, i.e., programmable matter which consists of simple computational elements called particles that can establish and release bonds and can actively move in a self-organized way. Particles are anonymous, have constant-size memory, and utilize only local interactions in order to coat an object.
We continue the study of our Universal Coating algorithm by focusing on its runtime analysis, showing that our algorithm terminates within a {\it linear number of rounds} with high probability. We also present a matching linear lower bound that holds with high probability. We use this lower bound to show a {\it linear lower bound on the competitive gap} between fully local coating algorithms and coating algorithms that rely on global information, which implies that our algorithm is also optimal in a competitive sense. Simulation results show that the competitive ratio of our algorithm may be better than linear in practice.
\end{abstract}

%
%

\vspace{-.3in}
\section{Introduction} \label{sec:intro}
\vspace{-.05in}
Inspection of bridges, tunnels, wind turbines, and other large civil engineering structures for defects is a time-consuming, costly, and potentially dangerous task. In the future, {\it smart coating} technology, or {\it smart paint}, could do the job more efficiently and without putting people in danger. The idea behind smart coating is to form a thin layer of a specific substance on an object which then makes it possible to measure a condition of the surface (such as temperature or cracks) at any location, without direct access to the location. The concept of smart coating already occurs in nature, such as proteins closing wounds, antibodies surrounding bacteria, or ants surrounding food to transport it to their nest. These diverse examples suggest a broad range of applications of smart coating technology in the future, including repairing cracks or monitoring tension on bridges, repairing space craft, fixing leaks in a nuclear reactor, or stopping internal bleeding.
We continue the study of coating problems in the context of self-organizing programmable matter consisting of simple computational elements, called particles, that can establish and release bonds and can actively move in a self-organized way using the geometric version of the amoebot model presented in~\cite{Daymude2016,Derakhshandeh2014}. In doing so, we proceed to investigate the runtime analysis of our Universal Coating algorithm, introduced in~\cite{Derakhshandeh2017}.
We first show that coating problems do not only have a (trivial) linear lower bound on the runtime, but that there is also a linear lower bound on the competitive gap between the runtime of fully local coating algorithms and coating algorithms that rely on global information.
We then investigate the worst-case time complexity of our Universal Coating algorithm and show that it terminates within a linear number of rounds with high probability (w.h.p.)\footnote{By {\it with high probability}, we mean with probability at least $1-1/n^c$, where $n$ is the number of particles in the system and $c>0$ is a constant.}, which implies that our algorithm is optimal in terms of worst-case runtime and also in a competitive sense.
Moreover, our simulation results show that in practice the competitive ratio of our algorithm is often better than linear.

\vspace{-.1in}
\subsection{Amoebot model} \label{sec:model}
\vspace{-.05in}
In the {\it amoebot model}, space is modeled as an infinite, undirected graph $G$ whose vertices are positions that can be occupied by at most one particle and whose edges represent all possible atomic transitions between these positions.
In the {\it geometric amoebot model}, we further assume that $G = \Geqt$, where $\Geqt = (\Veqt, \Eeqt)$ is the infinite regular triangular grid graph (see Figure~\ref{fig:model}a).
Each particle occupies either a single node (i.e., it is {\it contracted}) or a pair of adjacent nodes in $\Geqt$ (i.e., it is {\it expanded}), as in Figure~\ref{fig:model}b. Particles move by executing a series of {\it expansions} and {\it contractions}: a contracted particle can expand into an unoccupied adjacent node to become expanded, and completes its movement by contracting to once again occupy a single node. For an expanded particle, we denote the node it last expanded into as its {\it head} and the other node it occupies as its {\it tail}; for a contracted particle, the single node it occupies is both its head and its tail.

Two particles occupying adjacent nodes are said to be {\it neighbors} and are connected by a {\it bond}. These bonds both ensure that the overall particle system remains connected as well as providing a mechanism for exchanging information between particles. In order to maintain connectivity as they move, neighboring particles coordinate their motion in a \emph{handover}, which can occur in two ways. A contracted particle $p$ can initiate a handover by expanding into a node occupied by an expanded neighbor $q$, ``pushing'' $q$ and forcing it to contract. Alternatively, an expanded particle $q$ can initiate a handover by contracting, ``pulling'' a contracted neighbor $p$ to the node it is vacating, thereby forcing $p$ to expand. Figures~\ref{fig:model}b and~\ref{fig:model}c illustrate two particles performing a handover.

\begin{figure}
\centering
\includegraphics[width=.9\textwidth]{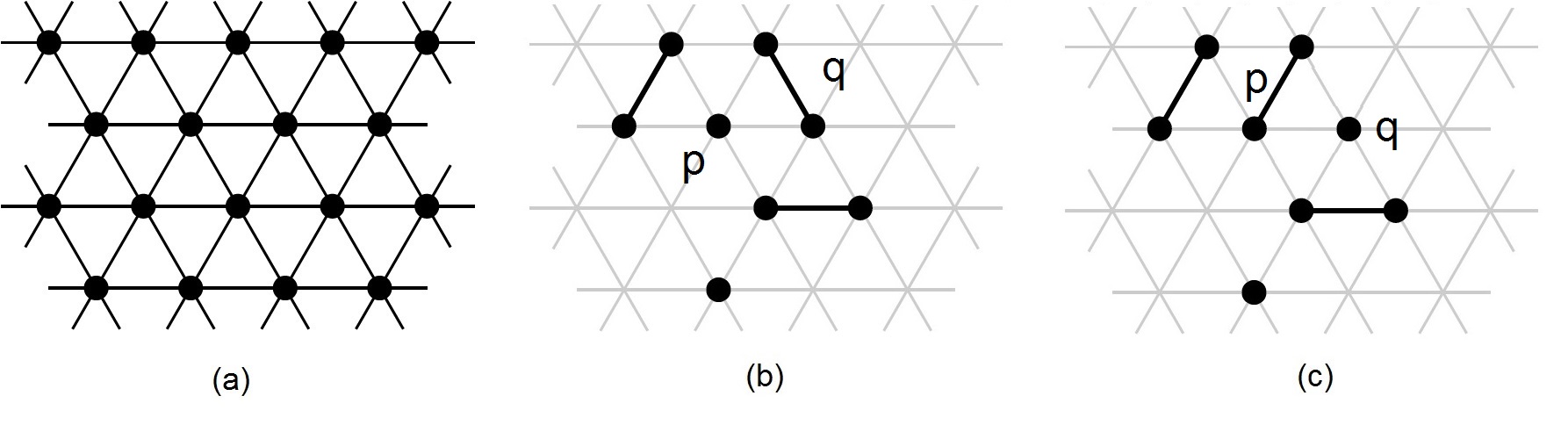}
\vspace{-.15in}
\caption{\small (a) shows a section of $\Geqt$, where nodes of $\Geqt$ are shown as black circles. (b) shows five particles on $\Geqt$; the underlying graph $\Geqt$ is depicted as a gray mesh; a contracted particle is depicted as a single black circle and an expanded particle is depicted as two black circles connected by an edge. (c) depicts the resulting configuration after a handover was performed by particles $p$ and $q$ in (b).}
\vspace{-.2in}
\label{fig:model}
\end{figure}

Particles are {\it anonymous}, but each keeps a collection of uniquely labeled {\it ports} corresponding to the edges incident to the node(s) it occupies. Bonds between neighboring particles are formed through ports that face each other.
The particles are assumed to have a common {\it chirality}, meaning they share the same notion of {\it clockwise (CW) direction}. This allows each particle to label its ports counting in the clockwise direction; without loss of generality, we assume each particle labels its head and tail ports from $0$ to $5$.
However, particles may have different offsets for their port labels, and thus do not share a common sense of orientation.
Each particle has a constant-size, local memory for which both it and its neighbors have read and write access. Particles can communicate by writing into each other's memories. Due to the limitation of constant-size memory, particles have no knowledge of the total number of particles in the system, nor do they have any approximation of this value.
We assume that any conflicts of movement or simultaneous memory writes are resolved arbitrarily, so that at most one particle writes to any memory location or moves into an empty position at any given time.

The \emph{configuration} $C$ of the particle system at the beginning of time $t$ consists of (1) the nodes in $\Geqt$ occupied by the object and the set of particles, and (2) the current state of each particle, including whether it is expanded or contracted, its port labeling, and the contents of its local memory.

Following the standard asynchronous model of computation~\cite{Lynch1996}, we assume that the system progresses through a sequence of atomic \emph{activations} of individual particles. When activated, a particle can perform a bounded amount of computation involving its local memory and the memories of its neighbors and at most one movement. A classical result under this model is that for any asynchronous concurrent execution of atomic activations, there exists a sequential ordering of the activations which produces the same end configuration, provided conflicts arising from the concurrent execution are resolved (as they are in our scenario). We assume the resulting activation sequence is \emph{fair}\footnote{We will see this notion of fairness is sufficient to prove the desired runtime for our algorithm; no further assumptions regarding the distribution of the activation sequence are necessary.}, i.e., for each particle $p$ and any time $t$, $p$  will eventually be activated at some time $t' > t$. An \emph{asynchronous round} is complete once every particle has been activated at least once.

\vspace{-.1in}
\subsection{Universal Coating Problem} \label{sec:problem}
\vspace{-.05in}
In the {\em universal coating problem} we consider an instance $(P,O)$ where $P$ represents the particle system and $O$ represents the fixed object to be coated. Let $n = |P|$ be the number of particles in the system, $V(P)$ be the set of nodes occupied by $P$, and $V(O)$ be the set of nodes occupied by $O$ (when clear from the context, we may omit the $V(\cdot)$ notation).
For any two nodes $v,w \in \Veqt$, the \emph{distance} $d(v,w)$ between $v$ and $w$ is the length of the shortest path in $\Geqt$ from $v$ to $w$. The distance $d(v,U)$ between a $v \in \Veqt$ and $U \subseteq \Veqt$ is defined as $\min_{w \in U} d(v,w)$. Define \emph{layer $i$} to be the set of nodes that have a distance $i$ to the object, and let $B_i$ be the number of nodes in layer $i$.
An instance is {\em valid} if the following properties hold:
\begin{enumerate}
\item The particles are all contracted and are initially in the \emph{idle} state.

\item The subgraphs of $\Geqt$ induced by $V(O)$ and $V(P) \cup V(O)$, respectively, are connected, i.e., there is a single object and the particle system is connected to the object.

\item The subgraph of $\Geqt$ induced by $\Veqt \setminus V(O)$ is connected, i.e., the object $O$ has no holes.\footnote{If $O$ does contain holes, we consider the subset of particles in each connected region of $\Veqt \setminus V(O)$ separately.}

\item $\Veqt \setminus V(O)$ is $2(\lceil\frac{n}{B_1}\rceil +1)$-connected, i.e., $O$ cannot form \emph{tunnels} of width less than $2(\lceil\frac{n}{B_1}\rceil +1)$.
\end{enumerate}

Note that a width of at least $2 \lceil\frac{n}{B_1}\rceil$ is needed to guarantee that the object can be evenly coated. The coating of narrow tunnels requires specific technical mechanisms that complicate the protocol without contributing to the basic idea of coating, so we ignore such cases in favor of simplicity.

A configuration $C$ is \emph{legal} if and only if all particles are contracted and
\[\min_{v \in \Veqt \setminus (V(P) \cup V(O))} d(v,V(O)) \ge \max_{v \in V(P)} d(v,V(O)),\]
meaning that all particles are as close to the object as possible or {\em coat $O$ as evenly as possible}.
A configuration $C$ is said to be \emph{stable} if no particle in $C$ ever performs a state change or movement.
An algorithm \emph{solves} the universal coating problem if, starting from any valid instance, it reaches a {\em stable legal configuration} in a finite number of rounds.

\vspace{-.1in}
\subsection{Related work} \label{sec:relwork}
\vspace{-.05in}
Many approaches have been proposed with potential applications in smart coating; these can be categorized as active and passive systems. In passive systems, particles move based only on their structural properties and interactions with their environment, or have only limited computational ability but lack control of their motion. Examples include population protocols~\cite{Angluin2006} as well as molecular computing models such as DNA self-assembly systems (see, e.g., the surveys in~\cite{Doty2012,Patitz2014,Woods2015}) and slime molds~\cite{Bonifaci2012,Li2010}.

Our focus, however, is on active systems, in which computational particles control their actions and motions to complete specific tasks. Coating has been extensively studied in the area of \textit{swarm robotics}, but not commonly treated as a stand-alone problem; it is instead examined as part of \emph{collective transport} (e.g.,~\cite{Wilson2014}) or \emph{collective perception} (e.g., see respective section of~\cite{Brambilla2013}).
Some research focuses on coating objects as an independent task under the name of \emph{target surrounding} or \emph{boundary coverage}. The techniques used in this context include stochastic robot behaviors~\cite{Kumar2014,Pavlic2016}, rule-based control mechanisms~\cite{Blazovics2012-CFC} and potential field-based approaches~\cite{Blazovics2012-LF}.
While the analytic techniques developed in swarm robotics are somewhat relevant to this work, many such systems assume more computational power and movement capabilities than the model studied in this work does.
Michail and Spirakis recently proposed a model~\cite{Michail2016} for network construction inspired by population protocols~\cite{Angluin2006}. The population protocol model is related to self-organizing particle systems, but is different in that agents (corresponding to our particles) can move freely in space and establish connections at any time. It would, however, be possible to adapt their approach to study coating problems under the population protocol model.

In the context of molecular programming, our model most closely relates to the {\it nubot} model by Woods et al.~\cite{Woods2013,Chen2015}, which seeks to provide a framework for rigorous algorithmic research on self-assembly systems composed of active molecular components, emphasizing the interactions between molecular structure and active dynamics. This model shares many characteristics of our amoebot model (e.g., space is modeled as a triangular grid, nubot monomers have limited computational abilities, and there is no global orientation) but differs in that nubot monomers can replicate or die and can perform coordinated rigid body movements.
These additional capabilities prohibit the direct translation of results under the nubot model to our amoebot model; the latter provides a framework for future, large-scale swarm robotic systems of computationally limited particles (each possibly at the nano- or micro-scale) with only local control and coordination mechanisms, where these capabilities would likely not apply.

Finally, in~\cite{Derakhshandeh2017} we presented our Universal Coating algorithm and proved its correctness. We also showed it to be worst-case work-optimal, where work is measured in terms of number of particle movements.

\vspace{-.1in}
\subsection{Our Contributions} \label{sec:contr}
\vspace{-.05in}
In this paper we continue the analysis of the {\it Universal Coating algorithm} introduced in~\cite{Derakhshandeh2017}.
As our main contribution in this paper, we investigate the runtime of our algorithm and prove that our algorithm terminates within a {\it linear number of rounds} with high probability. This result relies, in part, on an update to the leader election protocol used in~\cite{Derakhshandeh2017} which is fully defined and analyzed in~\cite{Daymude2017}.
We also present a matching linear lower bound for any {\it local-control} coating algorithm (i.e., one which uses only local information in its execution) that holds with high probability. We use this lower bound to show a {\it linear lower bound on the competitive gap} between fully local coating algorithms and coating algorithms that rely on global information, which implies that our algorithm is also optimal in a competitive sense.
We then present some simulation results demonstrating that in practice the competitive ratio of our algorithm is often much better than linear.

\vspace{-.1in}
\subsubsection{Overview}
In Section~\ref{sec:algo}, we again present the algorithm introduced in~\cite{Derakhshandeh2017}. We then present a comprehensive formal runtime analysis of our algorithm, by first presenting some lower bounds on the competitive ratio of any local-control algorithm in Section~\ref{sec:performance}, and then proving that our algorithm has a runtime of $\mathcal{O}(n)$ rounds w.h.p.~in Section~\ref{sec:WCruntime}, which matches our lower bounds.

\vspace{-.1in}
\section{Universal Coating Algorithm} \label{sec:algo}
\vspace{-.05in}
In this section, we summarize the Universal Coating algorithm introduced in~\cite{Derakhshandeh2017} (see~\cite{Derakhshandeh2017} for a detailed description). This algorithm is constructed by combining a number of asynchronous primitives, which are integrated seamlessly without any underlying synchronization. The {\it spanning forest} primitive organizes the particles into a spanning forest, which determines the movement of particles while preserving system connectivity; the {\it complaint-based coating} primitive coats the first layer by bringing any particles not yet touching the object into the first layer while there is still room; the {\it general layering} primitive allows each layer $i$ to form only after layer $i-1$ has been completed, for $i\geq 2$; and the {\it node-based leader election} primitive elects a node in layer 1 whose occupant becomes the leader particle, which is used to trigger the general layering process for higher layers.

\vspace{-.1in}
\subsection{Preliminaries} \label{sec:pre}
\vspace{-.05in}
We define the set of \emph{states} that a particle can be in as \emph{idle}, \emph{follower}, \emph{root}, and \emph{retired}. In addition to its state, a particle maintains a constant number of other flags, which in our context are constant size pieces of information visible to neighboring particles. A flag $x$ owned by some particle $p$ is denoted by $p.x$.
Recall that a \emph{layer} is the set of nodes $V \subseteq \Veqt$ that are equidistant to the object $O$. A particle keeps track of its current layer number in $p.layer$. In order to respect the constant-size memory constraint of particles, we take all layer numbers modulo $4$.
Each root particle $p$ has a flag $p.down$ which stores a port label pointing to a node of the object if $p.layer=1$, and to an occupied node adjacent to its head in layer $p.layer - 1$ if $p.layer>1$.
We now describe the coating primitives in more detail.

\vspace{-.1in}
\subsection{Coating Primitives} \label{subsec:CoatingPrimitives}
\vspace{-.05in}
The \textbf{\emph{spanning forest primitive}} (Algorithm~\ref{alg:spanningForestAlgorithm}) organizes the particles into a spanning forest $\mathcal{F}$, which yields a straightforward mechanism for particles to move while preserving connectivity (see~\cite{Daymude2016,Derakhshandeh2015-nanocom} for details).
Initially, all particles are \emph{idle}. A particle $p$ touching the object changes its state to \emph{root}. For any other idle particle $p$, if $p$ has a root or a follower in its neighborhood, it stores the direction to one of them in $p.parent$, changes its state to \emph{follower}, and generates a complaint flag; otherwise, it remains idle.
A follower particle $p$ uses handovers to follow its parent and updates the direction $p.parent$ as it moves in order to maintain the same parent in the tree (note that the particular particle at $p.parent$ may change due to $p$'s parent performing a handover with another of its children). In this way, the trees formed by the parent relations stay connected, occupy only the nodes they covered before, and do not mix with other trees.
A root particle $p$ uses the flag $p.dir$ to determine its movement direction. As $p$ moves, it updates $p.dir$ so that it always points to the next position of a clockwise movement around the object.
For any particle $p$, we call the particle occupying the position that $p.parent$ resp. $p.dir$ points to the \emph{predecessor} of $p$. If a root particle does not have a predecessor, we call it a \emph{super-root}.

\begin{algorithm}
\caption{Spanning Forest Primitive}
\label{alg:spanningForestAlgorithm}
    A particle $p$ acts depending on its state as described below: \\
    \begin{tabularx}{\textwidth}{lX}
        \textbf{idle}: &
        If $p$ is adjacent to the object $O$, it becomes a \emph{root} particle, makes the current node it occupies a {\it leader candidate position}, and starts running the leader election algorithm.
        If $p$ is adjacent to a \emph{retired} particle, $p$ also becomes a \emph{root} particle.
        If a neighbor $p'$ is a root or a follower, $p$ sets the flag $p.parent$ to the label of the port to $p'$, puts a \emph{complaint flag} in its local memory, and becomes a \emph{follower}.
        If none of the above applies, $p$ remains idle.
        \\

        \textbf{follower}: &
        If $p$ is contracted and adjacent to a retired particle or to $O$, then $p$ becomes a \emph{root} particle.
    If $p$ is contracted and has an expanded parent, then $p$ initiates {\sc Handover$(p)$} (Algorithm~\ref{alg:handover}); otherwise, if $p$ is expanded, it considers the following two cases: $(i)$ if $p$ has a contracted child particle $q$, then $p$ initiates {\sc Handover$(p)$}; $(ii)$ if $p$ has no children and no idle neighbor, then $p$ contracts.
        Finally, if $p$ is contracted, it runs the function {\sc ForwardComplaint$(p,p.parent)$} (Algorithm~\ref{alg:complaint}).
        \\

        \textbf{root}: &
        If particle $p$ is in layer 1, $p$ participates in the leader election process.
        If $p$ is contracted, it first executes {\sc MarkerRetiredConditions$(p)$} (Algorithm~\ref{alg:retiredCondition}) and becomes \emph{retired}, and possibly also a \emph{marker}, accordingly. If $p$ does not become retired, then if it has an expanded root in $p.dir$, it initiates {\sc Handover$(p)$}; otherwise, $p$ calls {\sc LayerExtension$(p)$} (Algorithm~\ref{alg:boundaryDirectionAlgorithm}).
        If $p$ is expanded, it considers the following two cases: $(i)$ if $p$ has a contracted child, then $p$ initiates {\sc Handover$(p)$}; $(ii)$ if $p$ has no children and no idle neighbor, then $p$ contracts.
        Finally, if $p$ is contracted, it runs {\sc ForwardComplaint$(p,p.dir)$}.
        \\

        \textbf{retired}: &
        $p$ clears a potential complaint flag from its memory and performs no further action.
    \end{tabularx}
\end{algorithm}

The \textbf{\emph{complaint-based coating primitive}} is used for the coating of the first layer. Each time a particle $p$ holding at least one complaint flag is activated, it forwards one to its predecessor as long as that predecessor holds less than two complaint flags. We allow each particle to hold up to two complaint flags to ensure that a constant size memory is sufficient for storing the complaint flags and so the flags quickly move forward to the super-roots. A contracted super-root $p$ only expands to $p.dir$ if it holds at least one complaint flag, and when it expands, it consumes one of these complaint flags. All other roots $p$ move towards $p.dir$ whenever possible (i.e., no complaint flags are required) by performing a handover with their predecessor (which must be another root) or a successor (which is a root or follower of its tree), with preference given to a follower so that additional particles enter layer 1. As we will see, these rules ensure that whenever there are particles in the system that are not yet at layer 1, eventually one of these particles will move to layer 1, unless layer 1 is already completely filled with contracted particles.

\begin{algorithm}
\caption{{\sc Handover}$(p)$}
\label{alg:handover}
    \begin{algorithmic}[1]
      \If{$p$ is expanded}
            \If {$p.layer = 1$ and $p$ has a follower child $q$ such that $q.parent$ points to the tail \item[\hspace{3em} of $p$]}
                \If {$q$ is contracted}
                \State $p$ initiates a handover with particle $q$
                \EndIf
            \ElsIf {$p$ has a contracted (follower or root) child $q$ such that $q.parent$ points \item[\hspace{3em} to the tail of $p$]}
        \State $p$ initiates a handover with particle $q$
            \EndIf
        \ElsIf {$p$ has an expanded parent $q$ or the position in $p.dir$ is occupied by an expanded root $q$}
    \State $p$ initiates a handover with particle $q$
      \EndIf
    \end{algorithmic}
\end{algorithm}

\begin{algorithm}
\caption{{\sc ForwardComplaint$(p,i)$}}
\label{alg:complaint}
    \begin{algorithmic}[1]
        \If {$p$ holds at least one complaint flag \textbf{and} the particle $q$ adjacent to $p$ in direction $i$ holds less than two complaint flags}
        \State $p$ forwards one complaint flag to $q$
        \EndIf
    \end{algorithmic}
\end{algorithm}

The \textbf{\emph{leader election primitive}} runs during the complaint-based coating primitive to elect a node in layer 1 as the leader position. This primitive is similar to the algorithm presented in~\cite{Daymude2016} with the difference that leader candidates are nodes instead of static particles (which is important because in our case particles may still move during the leader election primitive). The primitive only terminates once all positions in layer 1 are occupied. Once the leader position is determined, all positions in layer 1 are filled by contracted particles and whatever particle currently occupies that position becomes the \emph{leader}.
This leader becomes a marker particle, marking a neighboring position in the next layer as a {\it marked position} which determines a starting point for layer 2, and becomes \emph{retired}.
Once a contracted root $p$ has a retired particle in the direction $p.dir$, it retires as well, which causes the particles in layer 1 to become retired in counter-clockwise order. At this point, the general layering primitive becomes active, which builds subsequent layers until there are no longer followers in the system.
If the leader election primitive does not terminate (which only happens if $n<B_1$ and layer 1 is never completely filled), then the complaint flags ensure that the super-roots eventually stop, which eventually results in a stable legal coating.

\begin{algorithm}
\caption{{\sc LayerExtension$(p)$}}
\label{alg:boundaryDirectionAlgorithm}
    \begin{algorithmic}[1]
        \Statex{\bf Calculating $p.layer$, $p.down$ and $p.dir$}
        \State The layer number of any node occupied by the object is equal to 0.
        \State Let $q$ be any neighbor of $p$ with smallest layer number (modulo $4$).
        \State $p.down \gets \mbox{$p$'s label for port leading to } q$
        \State $p.layer=(q.layer+1)\mod \: 4 $
        \State {\sc Clockwise$(p, p.down)$} \Comment{Computes CW \& CCW directions}
        \If {$p.layer$ is {\em odd}}
            \State $p.dir \gets p.CW$
        \Else
            \State $p.dir \gets p.CCW$
        \EndIf

        \Statex
        \Statex \textbf{Extending layer $p.layer$}
        \If {the position at $p.dir$ is unoccupied, and either $p$ is not on the first layer or $p$ holds a complaint flag}
            \State $p$ expands in direction $p.dir$
            \State $p$ consumes a complaint flag, if it holds one
        \EndIf
    \end{algorithmic}
\end{algorithm}

In the \textbf{\emph{general layering primitive}}, whenever a follower is adjacent to a retired particle, it becomes a root. Root particles continue to move along positions of their layer in a clockwise (if the layer number is odd) or counter-clockwise (if the layer number is even) direction until they reach either the marked position of that layer, a retired particle in that layer, or an empty position of the previous layer (which causes them to change direction). Complaint flags are no longer needed to expand into empty positions. Followers follow their parents as before. A contracted root particle $p$ may retire if: (i) it occupies the marked position and the marker particle in the lower layer tells it that all particles in that layer are retired (which it can determine locally), or (ii) it has a retired particle in the direction $p.dir$. Once a particle at a marked position retires, it becomes a marker particle and marks a neighboring position in the next layer as a marked position.

\begin{algorithm}
\caption{{\sc Clockwise$(p,i)$}}
\label{alg:clockwise}
    \begin{algorithmic}[1]
        \State $j \gets i$, $k \gets i$
        \While{edge $j$ is incident to the object or to a retired particle with layer number $p.layer-1$}
            \State $j \gets (j - 1) \mod \; 6$
        \EndWhile
        \State $p.CW \gets j$
        \While{edge $k$ is incident to the object or to a retired particle with layer number $p.layer-1$}
            \State $k \gets (k + 1) \mod \; 6$
        \EndWhile
        \State $p.CCW \; \gets \; k$
    \end{algorithmic}
\end{algorithm}

\begin{algorithm}
\caption{{\sc MarkerRetiredConditions$(p)$}}
\label{alg:retiredCondition}
    \begin{algorithmic}[1]
        \Statex{\bf First Marker Condition:}
        \If {$p$ is the \emph{leader}}
            \State $p$ becomes a {\em retired} particle
            \State \parbox[t]{\dimexpr\linewidth-\algorithmicindent}{$p$ sets the flag $p.marker$ to be the label of a port leading to a node guaranteed not to be in layer 1 --- e.g., by taking the average direction of $p$'s two neighbors in layer 1 (by now complete)\strut}
        \EndIf

        \Statex
        \Statex{\bf Extending Layer Markers:}
        \If {$p$ is connected to a marker $q$ and the port $q.marker$ points towards $p$}
            \If {both $q.CW$ and $q.CCW$ are retired}
                \State $p$ becomes a \emph{retired} particle
                \State \parbox[t]{\dimexpr\linewidth-3em}{$p$ sets the flag $p.marker$ to the label of the port opposite the port connecting $p$ to $q$\strut}
            \EndIf
        \EndIf

        \Statex
        \Statex{\bf Retired Condition:}
        \If{the node in direction $p.dir$ is occupied by a retired particle}
            \State $p$ becomes a \emph{retired} particle
        \EndIf
    \end{algorithmic}
\end{algorithm}

\vspace{-.1in}
\section{Lower Bounds} \label{sec:performance}
\vspace{-.05in}
Recall that a \emph{round} is over once every particle in $P$ has been activated at least once.
The \emph{runtime} $T_{\mathcal{A}}(P,O)$ of a coating algorithm $\mathcal{A}$ is defined as the worst-case number of rounds (over all sequences of particle activations) required for $\mathcal{A}$ to solve the coating problem $(P,O)$. Certainly, there are instances $(P,O)$ where every coating algorithm has a runtime of $\Omega(n)$ (see Lemma~\ref{lem:lowerbound}), though there are also many other instances where the coating problem can be solved much faster. Since a worst-case runtime of $\Omega(n)$ is fairly large and therefore not very helpful to distinguish between different coating algorithms, we intend to study the runtime of coating algorithms relative to the best possible runtime.

\begin{lemma}
\label{lem:lowerbound}
The worst-case runtime required by any local-control algorithm to solve the universal coating problem is $\Omega(n)$.
\end{lemma}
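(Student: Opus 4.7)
The plan is to construct a valid instance $(P,O)$ and an adversarial activation schedule that together force any local-control coating algorithm to take $\Omega(n)$ rounds. I would take $O$ to be a straight line of $L = \Theta(n)$ nodes in $\Geqt$, with the constant chosen so that $B_1 < n$ (so layer~$1$ cannot hold all particles). Place the $n$ particles of $P$ in a compact connected cluster of diameter $O(\sqrt{n})$ abutting one end of $O$. Validity is easy: $V(P) \cup V(O)$ is connected, $O$ has no holes, and the tunnel-width bound $2(\lceil n/B_1 \rceil + 1) = O(1)$ is vacuously satisfied by the essentially infinite complement. Because $n > B_1$, every legal configuration must fill layer~$1$; in particular, the entire right half of the layer-$1$ ring, at distance $\Omega(n)$ from the initial cluster, is populated.

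For the schedule I would use round-robin: cycle through the particles in a fixed order so that every round contains exactly $n$ activations, one per particle; this is a legal fair schedule. The analysis proceeds via a potential function. Setting coordinates so that $O$ occupies $x \in \{0, \ldots, L-1\}$ with the initial cluster at $x \approx 0$, define
\[
\Phi(t) \;=\; \sum_{p \in P} \max\bigl(0,\; x(p.\mathrm{head}(t)) - L/2\bigr).
\]
Every particle starts at $x \leq O(\sqrt{n}) < L/2$, so $\Phi(0) = 0$. In any legal terminal configuration, the layer-$1$ cells at horizontal positions $x = L/2, \ldots, L-1$ (top and bottom of $O$) together contribute about $2 \sum_{k=1}^{L/2} k = \Theta(L^2) = \Theta(n^2)$, so $\Phi_{\text{final}} = \Omega(n^2)$.

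It remains to bound how fast $\Phi$ can grow. In any single activation, at most one particle's head moves, and it moves by at most one edge of $\Geqt$: a lone expansion shifts the expanding particle's head by one; a lone contraction leaves the head fixed; and in any handover exactly one of the two participants (the one that was contracted) shifts its head by one while the other's head is unchanged. Hence $\Phi$ increases by at most $1$ per activation and therefore by at most $n$ per round under round-robin. Combining with $\Phi_{\text{final}} - \Phi(0) = \Omega(n^2)$ yields $\Omega(n)$ rounds.

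The main point requiring care is the per-activation bound on $\Phi$. It rests on a case analysis of all six movement primitives (lone expansion, lone contraction, push as initiator, pull as initiator, passive partner in a push, passive partner in a pull) to confirm that exactly one head shifts by at most one edge and every other head is unchanged. Once this is pinned down, the forced saturation of layer~$1$ under the condition $n > B_1$ and the round-robin schedule together convert the potential gap $\Omega(n^2)$ into the claimed $\Omega(n)$ lower bound on rounds, valid for any local-control coating algorithm.
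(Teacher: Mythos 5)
Your proof is correct, but it takes a genuinely different route from the paper's. The paper's argument is a one-line \emph{distance} bound: it takes an instance with $B_1 > n$ in which the particles form a line hanging off the object, so that the last particle $p_n$ starts at distance $n$ from $O$ and must perform $\Theta(n)$ movements to become adjacent to it; under an activation sequence in which $p_n$ moves at most once per round, this forces $\Omega(n)$ rounds. Your argument is instead a \emph{transport/congestion} bound: you take $B_1 < n$ so that layer $1$ must be saturated, start all particles in an $O(\sqrt{n})$-diameter cluster at one end, and use the potential $\Phi$ (total head displacement past the midpoint) together with the fact that each activation moves at most one head by at most one grid edge, so $\Phi$ grows by $O(n)$ per round while the legal configuration forces $\Phi = \Omega(n^2)$. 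Both are sound; the paper's is shorter, while yours is more informative in that it shows the $\Omega(n)$ bound persists even when every particle starts within $O(\sqrt{n})$ of the object, so the cost is forced by mass transport along the surface rather than by a single initially distant particle — which is closer in spirit to the paper's Theorem~\ref{thm:competitiveness}. Two small points to tidy up: your case analysis of head displacement should note that in a handover the contracting particle vacates its \emph{tail} (so its head is indeed fixed); even under the most permissive reading where both participants' heads move, $\Phi$ grows by at most $2$ per activation, which only changes the constant. Also, since the runtime is defined as a worst case over activation sequences, you should state explicitly (as you implicitly do) that exhibiting the round-robin schedule suffices.
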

\begin{proof}
\begin{figure}
    \centering
    \includegraphics[width=60mm]{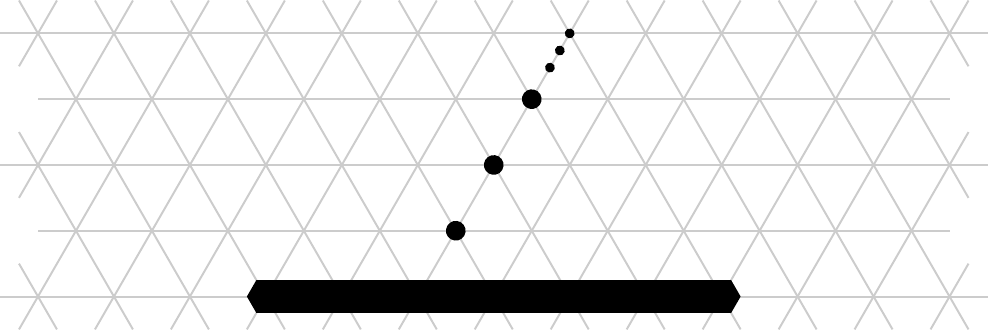}
    \caption[Lower Bound]{\small Worst-case configuration concerning number of rounds. There are $n$ particles (black dots) in a line connected to the surface via a single particle $p_1$.}
    \vspace{-.15in}
    \label{fig:lowerbound}
\end{figure}

Assume the particles $p_1, \ldots, p_n$ form a single line of $n$ particles connected to the surface via $p_1$ (Figure~\ref{fig:lowerbound}). Suppose $B_1>n$. Since  $d(p_n,O)=n$, it will take $\Omega(n)$ rounds in the worst-case (requiring $\Theta(n)$ movements) until $p_n$ touches the object's surface. This worst-case can happen, for example, if $p_n$ performs no more than one movement (either an expansion or a contraction) per round.
\qed
\end{proof}

Unfortunately, a large lower bound also holds for the competitiveness of any local-control algorithm. A coating algorithm $\mathcal{A}$ is called {\it $c$-competitive} if for any valid instance $(P,O)$,
\[\E[T_{\mathcal{A}}(P,O)] \leq c \cdot \OPT(P,O) + K\]
where $\OPT(P,O)$ is the minimum runtime needed to solve the coating problem $(P,O)$ and $K$ is a value independent of $(P,O)$.

\begin{theorem}
\label{thm:competitiveness}
Any local-control algorithm that solves the universal coating problem has a competitive ratio of $\Omega(n)$.
\end{theorem}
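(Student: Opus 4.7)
The plan is to produce a family of valid instances $\{(P_n,O_n)\}_{n\geq 1}$ with $|P_n|=n$ for which $\OPT(P_n,O_n)=\mathcal{O}(1)$ while every local-control algorithm $\mathcal{A}$ satisfies $\E[T_{\mathcal{A}}(P_n,O_n)]=\Omega(n)$. Because the additive constant $K$ in the definition of $c$-competitiveness is required to be independent of the instance, the inequality $\E[T_{\mathcal{A}}(P_n,O_n)]\leq c\cdot\OPT(P_n,O_n)+K$ can then hold simultaneously for every $n$ only if $c=\Omega(n)$, which yields the theorem.

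For the ``easy'' side of the bound, I would construct $(P_n,O_n)$ so that its initial configuration already satisfies the legality predicate of Section~\ref{sec:problem} up to the trivial transitions out of the idle state. A natural candidate is $n$ particles already occupying all $n$ positions of layer~$1$ around a suitable object, so that an algorithm endowed with global information can certify legality and direct every particle into its final state in $\mathcal{O}(1)$ rounds, giving $\OPT(P_n,O_n)=\mathcal{O}(1)$.

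The matching local-control lower bound is the heart of the proof. I would pair the easy instance with a ``hard'' instance $(P_n',O_n)$ that differs from $(P_n,O_n)$ only in a region at graph-distance $\Omega(n)$ from some distinguished particle $p$, chosen so that the hard instance forces some particle to traverse a path of length $\Omega(n)$ before a stable legal configuration is reached (the line configuration of Lemma~\ref{lem:lowerbound} is the template). In the amoebot model, information propagates at most a constant number of hops per asynchronous round, so within the first $o(n)$ rounds the $\Omega(n)$-neighborhood of $p$ sees exactly the same local history in both instances. A standard coupling over the coin tosses of a randomized local-control algorithm then implies that the distribution of $p$'s actions must agree in the two instances during this window, and hence the $\Omega(n)$-round lower bound that Lemma~\ref{lem:lowerbound} provides on the hard instance transfers to the easy one.

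The step I expect to be the main obstacle is the simultaneous engineering of the two instances: both must be valid under the amoebot model (in particular satisfying the tunnel-width bound $2(\lceil n/B_1\rceil+1)$), the easy one must genuinely admit termination in $\mathcal{O}(1)$ rounds even for particles that must first leave the idle state without a global ``go'' signal, and the locally indistinguishable perturbation must be placed so that Lemma~\ref{lem:lowerbound} truly forces $\Omega(n)$ rounds on the hard instance. Striking this balance while respecting the object connectivity and no-holes conditions is where the real work of the proof lies.
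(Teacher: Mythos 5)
Your overall template---a family of instances with $\OPT=\mathcal{O}(1)$ on which every local-control algorithm needs $\Omega(n)$ expected rounds, so that the instance-independence of $K$ forces $c=\Omega(n)$---matches the paper's, but the lower-bound half of your argument has a genuine gap. The step ``the $\Omega(n)$-round lower bound that Lemma~\ref{lem:lowerbound} provides on the hard instance transfers to the easy one'' is a non sequitur. Lemma~\ref{lem:lowerbound} is a travel-distance bound: the hard instance is slow because some particle sits at distance $\Omega(n)$ from the object and must physically traverse that path. Indistinguishability of the two instances within the $\Omega(n)$-neighborhood of your distinguished particle $p$ only tells you that $p$ and its neighbors behave identically in both executions for $o(n)$ rounds; it in no way prevents the easy instance from reaching a stable legal configuration during that window, since in the easy instance no particle has far to travel. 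For an indistinguishability argument to lower-bound the easy instance, the two instances must demand \emph{different} behavior from the region that cannot tell them apart; your pairing instead concentrates all of the hard instance's difficulty in the far-away region itself, where the algorithm sees locally exactly what it must do. Relatedly, taking the easy instance to be an already-legal configuration is self-defeating: you must defeat \emph{every} correct local algorithm, and there is no apparent obstruction to a correct algorithm terminating in $\mathcal{O}(1)$ rounds on an instance in which staying put already constitutes a solution.

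The paper builds the required uncertainty into a single instance rather than a pair: the object is a line, layer~1 is full except for one gap centered below it, and one surplus particle sits centered above it. Resolving the imbalance requires net particle flow across one of the two ends of the line; within $n/4$ rounds the flows across the two ends are independent and, by the left--right symmetry of the instance, identically distributed, so with probability at least $1/2$ the surplus has not reached the deficit after $n/4$ rounds. This gives $\E[T_{\mathcal{A}}(P,O)]\geq n/8$, while a global algorithm rotates every particle one step around the object and finishes in $\mathcal{O}(1)$ rounds. If you want to retain a two-instance flavor, you would need instances of that shape---for example, ones that differ only in which end the surplus must exit through---so that the indistinguishable middle region is genuinely forced to guess.
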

\begin{proof}
We construct an instance of the coating problem $(P,O)$ which can be solved by an optimal algorithm in $\mathcal{O}(1)$ rounds, but requires any local-control algorithm $\Omega(n)$ times longer. Let $O$ be a straight line of arbitrary (finite) length, and let $P$ be a set of particles which entirely occupy layer 1, with the exception of one missing particle below $O$ equidistant from its sides and one additional particle above $O$ in layer 2 equidistant from its sides (see Figure~\ref{fig:compgapconfig}).

An optimal algorithm could move the particles to solve the coating problem for the given example in $\mathcal{O}(1)$ rounds, as in Figure~\ref{fig:compgapopt}. Note that the optimal algorithm always maintains the connectivity of the particle system, so its runtime is valid even under the constraint that any connected component of particles must stay connected. However, for our local-control algorithms we allow particles to disconnect from the rest of the system.

\begin{figure}
    \centering
    \vspace{-.1in}
    \includegraphics[width=.6\textwidth]{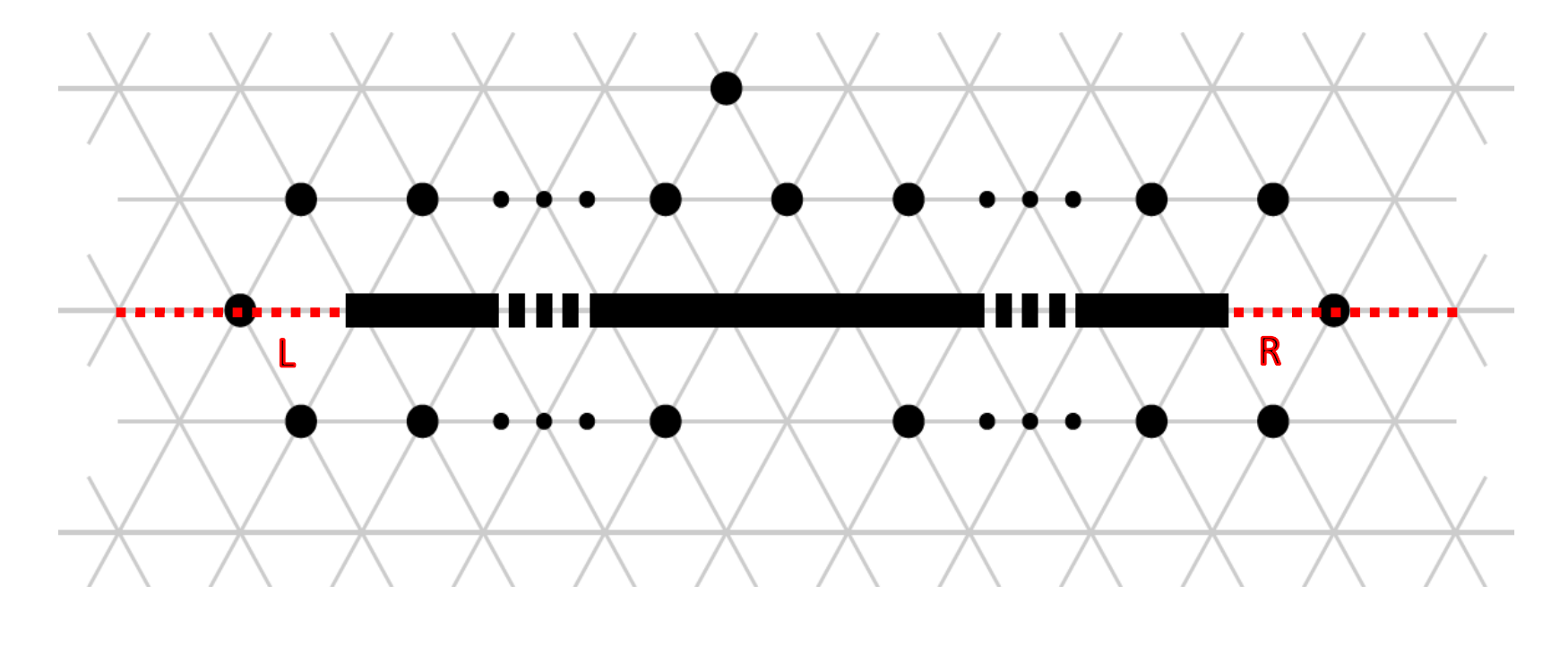}
    \vspace{-.15in}
    \caption[Competitiveness]{\small The object occupies a straight line in $\Geqt$. The particles are all contracted and occupy the positions around the object, with the exception that there is one unoccupied node below the object and one extra particle above the object. Borders $L$ and $R$ are shown as red lines.}
    \vspace{-.15in}
    \label{fig:compgapconfig}
\end{figure}

Now consider an arbitrary local-control algorithm $A$ for the coating problem. Given a round $r$, we define the \emph{imbalance} $\phi_L(r)$ at border $L$ as the net number of particles that have crossed $L$ from the top of $O$ to the bottom until round $r$; similarly, the imbalance $\phi_R(r)$ at border $R$ is defined to be the net number of particles that have crossed $R$ from the bottom of $O$ to the top until round $r$.

\begin{figure}
    \centering
    \vspace{-.1in}
    \includegraphics[width=\textwidth]{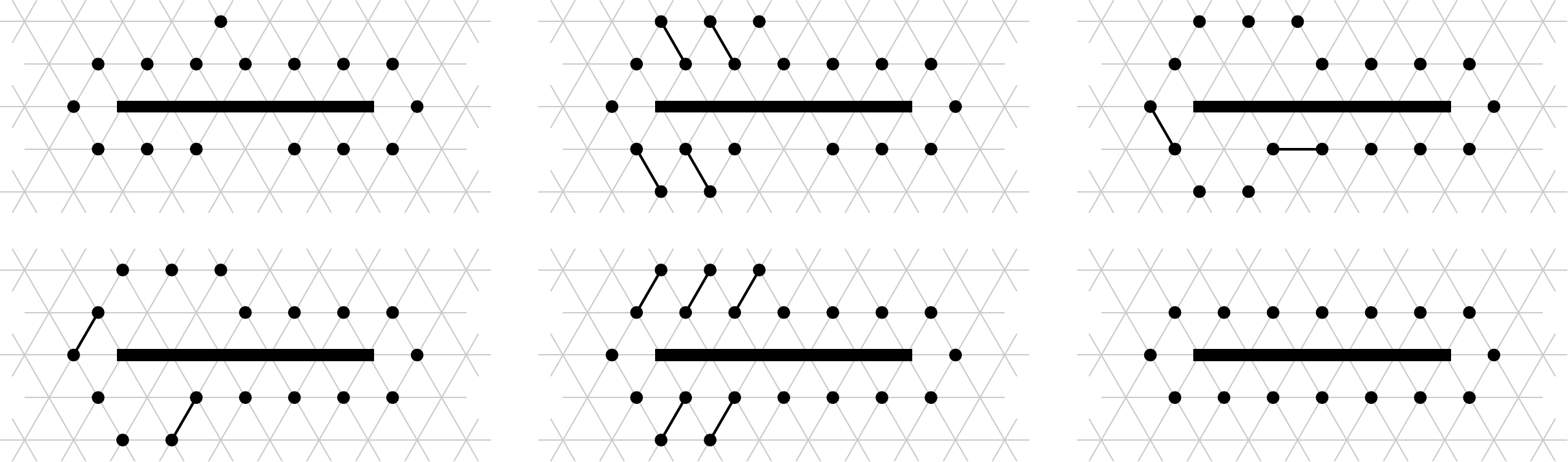}
    \vspace{-.15in}
    \caption[Optimal Algorithm]{\small Each subfigure represents the configuration of the system at the beginning of a round, and are ordered from left to right, top to bottom. After 5 rounds (i.e., at the beginning of the sixth round) the object is coated. Note that the implied algorithm can be adapted to any length of the object and always requires only 5 rounds to solve the coating problem.}
    \vspace{-.15in}
    \label{fig:compgapopt}
\end{figure}

Certainly, there is an activation sequence in which information and particles can only travel a distance of up to $n/4$ nodes towards $L$ or $R$ within the first $n/4$ rounds. Hence, for any $r \leq n/4$, the probability distributions of $\phi_L(r)$ and $\phi_R(r)$ are independent of each other. Additionally, particles up to a distance of $n/4$ from $L$ and $R$ cannot distinguish between which border they are closer to, since the position of the gap is equidistant from the borders. This symmetry also implies that $\Pr[\phi_L(r)=k]=\Pr[\phi_R(r)=k]$ for any integer $k$. Let us focus on round $r=n/4$. We distinguish between the following cases.

\begin{case}
$\phi_L(n/4)=\phi_R(n/4)$. Then there are more particles than positions in layer 1 above $O$, so the coating problem cannot be solved yet.
\end{case}

\begin{case}
$\phi_L(n/4)\neq\phi_R(n/4)$. From our insights above we know that for any two values $k_1$ and $k_2$, $\Pr[\phi_L(n/4)=k_1$ and $\phi_R(n/4)=k_2] = \Pr[\phi_L(n/4)=k_2$ and $\phi_R(n/4)=k_1]$. Hence, the cumulative probability of all outcomes where $\phi_L(n/4)<\phi_R(n/4)$ is equal to the cumulative probability of all outcomes where $\phi_L(n/4)>\phi_R(n/4)$. If $\phi_L(n/4)<\phi_R(n/4)$, then there are again more particles than positions in layer 1 above $O$, so the coating problem cannot be solved yet.
\end{case}

Thus, the probability that $\mathcal{A}$ has not solved the coating problem after $n/4$ rounds is at least $1/2$, and therefore $\E[T_{\mathcal{A}}(P,O)] \geq 1/2 \cdot n/4 = n/8$. Since, on the other hand, $\OPT = \mathcal{O}(1)$, we have established a linear competitive ratio.
\qed
\end{proof}

Therefore, even the competitive ratio can be very high in the worst case. We will revisit the notion of competitiveness in Section~\ref{sec:experimental}.

\vspace{-.1in}
\section{Worst-Case Number of Rounds} \label{sec:WCruntime}
\vspace{-.05in}
In this section, we show that our algorithm solves the coating problem within a linear number of rounds w.h.p.
\footnote{This version of the paper reflects what was submitted to the DNA22 Special Issue of the journal Natural Computing, and updates the logical structure of this section from its original publication in DNA22.}.
We start with some basic notation in Section~\ref{sec:preliminaryanalysis}. Section~\ref{sec:parallel} presents a simpler synchronous parallel model for particle activations that we can use to analyze the worst-case number of rounds. Section~\ref{sec:firstlayer} presents the analysis of the number of rounds required to coat the first layer. Finally, in Section~\ref{sec:higherlayers}, we analyze the number of rounds required to complete all other coating layers, once layer 1 has been completed.

\vspace{-.1in}
\subsection{Preliminaries} \label{sec:preliminaryanalysis}
\vspace{-.05in}
We start with some notation. Recall that $B_i$ denotes the number of nodes of $\Geqt$ at distance $i$ from object $O$ (i.e., the number of nodes in layer $i$). Let $N$ be the the layer number of the final layer for $n$ particles (i.e., $N$ satisfies $\sum_{j=1}^{N-1}B_j<n\leq \sum_{j=1}^{N}B_j$).
Layer $i$ is said to be \emph{complete} if every node in layer $i$ is occupied by a contracted retired particle (for $i < N$), or if all particles have reached their final position, are contracted, and never move again (for $i = N$).

Given a configuration $C$, we define a directed graph $A(C)$ over all nodes in $\Geqt$ occupied by \emph{active} (follower or root) particles in $C$. For every expanded active particle $p$ in $C$, $A(C)$ contains a directed edge from the tail to the head of $p$. For every follower $p$, $A(C)$ has a directed edge from the head of $p$ to $p.parent$. For the purposes of constructing $A(C)$, we also define parents for root particles: a root particle $p$ sets $p.parent$ to be the active particle $q$ occupying the node in direction $p.dir$ once $p$ has performed its first handover expansion with $q$. For every root particle $p$, $A(C)$ has a directed edge from the head of $p$ to $p.parent$, if it exists.
Certainly, since every node has at most one outgoing edge in $A(C)$, the nodes of $A(C)$ form either a collection of disjoint trees or a ring of trees. A ring of trees may occur in any layer, but only temporarily; the leader election primitive ensures that a leader emerges and retires in layer 1 and marker particles emerge and retire in higher layers, causing the ring in $A(C)$ to break.
The super-roots defined in Section~\ref{subsec:CoatingPrimitives} correspond to the roots of the trees in $A(C)$.

A \emph{movement} executed by a particle $p$ can be either a \emph{sole contraction} in which $p$ contracts and leaves a node unoccupied, a \emph{sole expansion} in which $p$ expands into an adjacent unoccupied node, a \emph{handover contraction with $p'$} in which $p$ contracts and forces its contracted neighbor $p'$ to expand into the node it vacates, or a \emph{handover expansion with $p'$} in which $p$ expands into a node currently occupied by its expanded neighbor $p'$, forcing $p'$ to contract.

\vspace{-.1in}
\subsection{From asynchronous to parallel schedules} \label{sec:parallel}
\vspace{-.05in}
In this section, we show that instead of analyzing our algorithm for asynchronous activations of particles, it suffices to consider a much simpler model of parallel activations of particles.
Define a \emph{movement schedule} to be a sequence of particle system configurations $(C_0, \ldots, C_t)$.

\begin{definition}
\label{defn:parallelschedule}
A movement schedule $(C_0, \ldots, C_t)$ is called a \emph{parallel schedule} if each $C_i$ is a valid configuration of a connected particle system (i.e., each particle is either expanded or contracted, and every node of $\Geqt$ is occupied by at most one particle) and for every $i \geq 0,~C_{i+1}$ is reached from $C_i$ such that for every particle $p$ one of the following properties holds:
\begin{enumerate}
    \item $p$ occupies the same node(s) in $C_i$ and $C_{i+1}$,
    \item $p$ expands into an adjacent node that was empty in $C_i$,
    \item $p$ contracts, leaving the node occupied by its tail empty in $C_{i+1}$, or
    \item $p$ is part of a handover with a neighboring particle $p'$.
\end{enumerate}
\end{definition}

While these properties allow at most one contraction or expansion per particle in moving from $C_i$ to $C_{i+1}$, multiple particles may move in this time.

Consider an arbitrary fair asynchronous activation sequence $A$ for a particle system, and let $C_i^{(A)}$, for $0 \leq i \leq t$, be the particle system configuration at the end of asynchronous round $i$ in $A$ if each particle moves according to Algorithm~\ref{alg:spanningForestAlgorithm}.
A \emph{forest schedule} $\mathcal{S} = (A, (C_0, \ldots, C_t))$ is a parallel schedule $(C_0, \ldots, C_t)$ with the property that $A(C_0)$ is a forest of one or more trees, and each particle $p$ follows the unique path $P_p$ which it would have followed according to $A$, starting from its position in $C_0$. This implies that $A(C_i)$ remains a forest of trees for every $1 \leq i \leq t$.
A forest schedule is said to be \emph{greedy} if all particles perform movements according to Definition~\ref{defn:parallelschedule} in the direction of their unique paths whenever possible.

We begin our analysis with a result that is critical to both describing configurations of particles in greedy forest schedules and quantifying the amount of progress greedy forest schedules make over time. Specifically, we show that if a forest's configuration is ``well-behaved'' at the start, then it remains so throughout its greedy forest schedule, guaranteeing that progress is made once every two configurations.

\begin{lemma}
\label{lem:expparentchild}
Given any fair asynchronous activation sequence $A$, consider any greedy forest schedule $(A, (C_0, \ldots, C_t))$. If every expanded parent in $C_0$ has at least one contracted child, then every expanded parent in $C_i$ also has at least one contracted child, for $1 \leq i \leq t$.
\end{lemma}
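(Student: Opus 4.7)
The plan is to prove the lemma by induction on $i$, with the base case $i=0$ holding by the lemma's hypothesis. For the inductive step, I would assume every expanded parent in $C_i$ has a contracted child, fix an arbitrary expanded parent $p$ in $C_{i+1}$, and split on $p$'s state in $C_i$.

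In the first case, $p$ is already expanded in $C_i$. An expanded particle's only possible movement is a contraction, so since $p$ remains expanded in $C_{i+1}$, $p$ cannot have moved. By the inductive hypothesis, $p$ has a contracted child $q$ in $C_i$, and in a greedy forest schedule $q$ advances only along its path $P_q$, whose next step is a handover expansion into $p$'s tail --- an action that requires $p$ to contract. Since $p$ did not, $q$ stays put and remains a contracted child of $p$ in $C_{i+1}$.

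In the second case, $p$ is contracted in $C_i$ and expanded in $C_{i+1}$. Because the forest $A(C_0)$ is preserved throughout a forest schedule (parent pointers are only updated to track the same parent particle), the children of $p$ in $C_{i+1}$ are precisely its children in $C_i$; since $p$ is a parent, there is at least one such child $q$. If $q$ was contracted in $C_i$, its next step on $P_q$ is a handover expansion into $p$'s former position, which is now occupied by $p$'s tail, so the handover is blocked and $q$ remains contracted. If $q$ was expanded in $C_i$, its next step on $P_q$ is a contraction, which the greedy schedule carries out in this step: by the inductive hypothesis applied to $q$, either $q$ has its own contracted child with which to handover, or $q$ is a leaf and contracts on its own. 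Either subcase leaves $q$ contracted and still a child of $p$ in $C_{i+1}$.

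The hardest part of executing this plan will be to rigorously justify the movement behavior I ascribe to the greedy forest schedule: that a contracted particle's forward motion along $P_q$ is always realized as a handover with its expanded predecessor, that parent-child links survive each handover so the forest inherited from $A(C_0)$ really is preserved, and that $q$ can contract in the very same step as $p$ expands. Each of these reduces to a careful matching of the path $P_q$ induced by the asynchronous sequence $A$ against the handover rules of Algorithm~\ref{alg:spanningForestAlgorithm}, but once they are in hand the case split above yields the invariant immediately.
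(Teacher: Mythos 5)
Your proposal is correct and follows essentially the same route as the paper's proof: a case split on whether the expanded parent $p$ was expanded or contracted in the preceding configuration, using the forest structure and the handover rules to conclude that a contracted child survives (the paper phrases this as a minimal-counterexample contradiction rather than induction, and in your first case it derives a contradiction from greediness --- the handover would have contracted $p$ --- where you instead argue the blocked child stays contracted, but the underlying facts are identical). The one point the paper addresses that you gloss over is the possibility of trees merging when a super-root expands next to a particle of another tree; since a root only adopts a parent after performing a handover expansion with it, no new expanded child is created this way, so your claim that $p$'s set of children is preserved between consecutive configurations does go through.
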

\begin{proof}
Suppose to the contrary that $C_i$ is the first configuration that contains an expanded parent $p$ which has all expanded children. We consider all possible expanded and contracted states of $p$ and its children in $C_{i-1}$ and show that none of them can result in $p$ and its children all being expanded in $C_i$.
First suppose $p$ is expanded in $C_{i-1}$; then by supposition, $p$ has a contracted child $q$. By Definition~\ref{defn:parallelschedule}, $q$ cannot perform any movements with its children (if they exist), so $p$ performs a handover contraction with $q$, yielding $p$ contracted in $C_i$, a contradiction.
So suppose $p$ is contracted in $C_{i-1}$. We know $p$ will perform either a handover with its parent or a sole expansion in direction $p.dir$ since it is expanded in $C_i$ by supposition. Thus, any child of $p$ in $C_{i-1}$ --- say $q$ --- does not execute a movement with $p$ in moving from $C_{i-1}$ to $C_i$. Instead, if $q$ is contracted in $C_{i-1}$ then it remains contracted in $C_i$ since it is only permitted to perform a handover with its unique parent $p$; otherwise, if $q$ is expanded, it performs either a sole contraction if it has no children or a handover with one of its contracted children, which it must have by supposition. In either case, $p$ has a contracted child in $C_i$, a contradiction.

As a final observation, two trees of the forest may ``merge'' when the super-root $s$ of one tree performs a sole expansion into an unoccupied node adjacent to a particle $q$ of another tree. However, since $s$ is a root and thus only defines $q$ as its parent after performing a handover expansion with it, the lemma holds in this case as well.
\qed
\end{proof}

For any particle $p$ in a configuration $C$ of a forest schedule, we define its \emph{head distance} $d_h(p,C)$ (resp., \emph{tail distance} $d_t(p,C)$) to be the number of edges along $P_p$ from the head (resp., tail) of $p$ to the end of $P_p$. Depending on whether $p$ is contracted or expanded, we have $d_h(p,C) \in \{d_t(p,C), d_t(p,C)-1\}$.
For any two configurations $C$ and $C'$ and any particle $p$, we say that $C$ \emph{dominates $C'$ w.r.t.~$p$}, denoted $C(p) \succeq C'(p)$, if and only if $d_h(p,C) \leq d_h(p,C')$ and $d_t(p,C) \leq d_t(p,C')$. We say that $C$ \emph{dominates} $C'$, denoted $C \succeq C'$, if and only if $C$ dominates $C'$ with respect to every particle. Then it holds:

\begin{lemma}
\label{lem:forestdom}
Given any fair asynchronous activation sequence $A$ which begins at an initial configuration $C_0^{(A)}$ in which every expanded parent has at least one contracted child, there is a greedy forest schedule $\mathcal{S} = (A, (C_0, \ldots, C_t))$ with $C_0 = C_0^{(A)}$ such that $C_i^{(A)} \succeq C_i$ for all $0 \leq i \leq t$.
\end{lemma}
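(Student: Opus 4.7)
The plan is to proceed by induction on the step index $i$, with the inductive construction of $C_{i+1}$ driven by what happens during asynchronous round $i+1$.

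For the base case ($i=0$), we have $C_0 = C_0^{(A)}$ by hypothesis, so $C_0^{(A)} \succeq C_0$ trivially holds for every particle.

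For the inductive step, assume $C_i^{(A)} \succeq C_i$. I would construct $C_{i+1}$ as a valid greedy parallel step from $C_i$, choosing among possible moves so as to mirror the asynchronous round from $C_i^{(A)}$ to $C_{i+1}^{(A)}$. Since the hypothesis of Lemma~\ref{lem:expparentchild} holds at $C_0$, that lemma guarantees the invariant ``every expanded parent has at least one contracted child'' is preserved along our greedy schedule up to $C_i$, so any required handover can be realized. When the greedy rule admits a choice (for instance, which contracted child an expanded parent pairs with), I would resolve it by inspecting $A$ and picking the child that the asynchronous schedule eventually pairs with during this round.

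To verify $C_{i+1}^{(A)} \succeq C_{i+1}$, I would case-analyze each particle $p$. If $p$ performs no movement in the parallel step, then $d_h(p, C_{i+1}) = d_h(p, C_i)$ and $d_t(p, C_{i+1}) = d_t(p, C_i)$; since distances along $P_p$ are non-increasing in the asynchronous schedule, the inductive hypothesis immediately gives $d_h(p, C_{i+1}^{(A)}) \leq d_h(p, C_i^{(A)}) \leq d_h(p, C_i) = d_h(p, C_{i+1})$, and similarly for the tail. If $p$ does move, its head or tail distance decreases by exactly one. In the subcase where the inductive hypothesis gave a strict inequality on the relevant distance, the one-edge slack absorbs this advancement. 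In the subcase of equality, the coupled construction ensures $p$'s parallel move reflects a move $p$ actually performs in the asynchronous round (being activated at least once per round, and following the same Algorithm~\ref{alg:spanningForestAlgorithm}), so the asynchronous distance also drops by at least one, preserving domination.

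The main obstacle will be rigorously showing that the coupled greedy step always exists; that is, that mirroring the asynchronous round never contradicts the greedy requirement that \emph{every} movable particle moves. Cascading handover dependencies through the forest are the delicate point: a choice of handover partner at one parent constrains what a child or grandparent can do in the same step. Lemma~\ref{lem:expparentchild} is the key lever here, because it ensures that along every chain of expanded particles there is always a contracted child available, so we never get stuck with a movable particle that has no valid greedy move compatible with $A$. This coupling is the technical heart of the argument and is what lets us subsequently analyze the worst-case runtime in the simpler parallel model while inheriting the bound for the original asynchronous setting.
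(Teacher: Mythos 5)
Your proposal follows essentially the same route as the paper: couple the greedy parallel schedule to the asynchronous execution by having each particle perform its own movement sequence from $A$, use Lemma~\ref{lem:expparentchild} to keep the greedy schedule well-defined (this is exactly the paper's embedded Claim, proved there by a case analysis on the type of $m_i(p)$), and then induct on configurations with your three-way split (no move / slack / equality). The one place the paper is more careful than your sketch is the equality subcase: that $p$ actually executes $m_i(p)$ during asynchronous round $i+1$ does not follow merely from fairness and determinism --- one must verify the move is enabled in $C_i^{(A)}$, which the paper does by applying the dominance hypothesis to $p$'s handover partner $q$ (if $q$ were not expanded in $C_i^{(A)}$ while expanded in $C_i$, then $d_h(q, C_i^{(A)}) > d_h(q, C_i)$, contradicting induction).
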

\begin{proof}
We first introduce some supporting notation. Let $M(p) = p^{(1)}, p^{(2)}, \ldots$ be the sequence of movements $p$ executes according to $A$. Let $M_i(p)$ denote the remaining sequence of movements in $M(p)$ after the forest schedule reaches $C_i$, and let $m_i(p)$ denote the first movement in $M_i(p)$.

\begin{claim}
A greedy forest schedule $\mathcal{S} = (A, (C_0, \ldots, C_t))$ can be constructed from configuration $C_0 = C_0^{(A)}$ such that, for every $0 \leq i \leq t$, configuration $C_i$ is obtained from $C_{i-1}$ by executing only the movements of a greedily selected, mutually compatible subset of $\{m_{i-1}(p) : p \in P\}$.
\end{claim}
\begin{proof}
Argue by induction on $i$, the current configuration number. $C_0$ is trivially obtained, as it is the initial configuration.
Assume by induction that the claim holds up to $C_{i-1}$. W.l.o.g.~let $M_{i-1} = \{m_{i-1}(p_1), \ldots, m_{i-1}(p_k)\}$, for $k \leq n$, be the greedily selected, mutually compatible subset of movements that $\mathcal{S}$ performs in moving from $C_{i-1}$ to $C_i$.
Suppose to the contrary that a movement $m'(p) \not\in M_{i-1}$ is executed by a particle $p \in P$. It is easily seen that $m'(p)$ cannot be $m_{i-1}(p)$; since $m_{i-1}(p)$ was excluded when $M_{i-1}$ was greedily selected, it must be incompatible with one or more of the selected movements and thus cannot also be executed at this time. So $m'(p) \neq m_{i-1}(p)$, and we have the cases below:

\begin{case}
$m_{i-1}(p)$ is a sole contraction. Then $p$ is expanded and has no children in $C_{i-1}$, so we must have $m'(p) = m_{i-1}(p)$, since there are no other movements $p$ could execute, a contradiction.
\end{case}

\begin{case}
$m_{i-1}(p)$ is a sole expansion. Then $p$ is contracted and has no parent in $C_{i-1}$, so we must have $m'(p) = m_{i-1}(p)$, since there are no other movements $p$ could execute, a contradiction.
\end{case}

\begin{case}
$m_{i-1}(p)$ is a handover contraction with $q$, one of its children. Then at some time in $\mathcal{S}$ before reaching $C_{i-1}$, $q$ became a descendant of $p$; thus, $q$ must also be a descendant of $p$ in $C_{i-1}$.
If $q$ is not a child of $p$ in $C_{i-1}$, there exists a particle $z \not\in \{p, q\}$ such that $q$ is a descendant of $z$, which is in turn a descendant of $p$. So in order for $m_{i-1}(p)$ to be a handover contraction with $q$, $M(z)$ must include actions which allow $z$ to ``bypass'' its ancestor $p$, which is impossible. So $q$ must be a child of $p$ in $C_{i-1}$, and must be contracted at the time $m_{i-1}(p)$ is performed.
If $q$ is also contracted in $C_{i-1}$, then once again we must have $m'(p) = m_{i-1}(p)$. Otherwise, $q$ is expanded in $C_{i-1}$, and must have become so before $C_{i-1}$ was reached. But this yields a contradiction: since $\mathcal{S}$ is greedy, $q$ would have contracted prior to this point by executing either a sole contraction if it has no children, or a handover contraction with a contracted child whose existence is guaranteed by Lemma~\ref{lem:expparentchild}, since every expanded parent in $C_0$ has a contracted child.
\end{case}

\begin{case}
$m_{i-1}(p)$ is a handover expansion with $q$, its unique parent. Then we must have that $m_{i-1}(q)$ is a handover contraction with $p$, and an argument analogous to that of Case 3 follows.
\end{case}
\vspace{-.27in}\qed
\end{proof}

We conclude by showing that each configuration of the greedy forest schedule $\mathcal{S}$ constructed according to the claim is dominated by its asynchronous counterpart.
Argue by induction on $i$, the configuration number. Since $C_0 = C_0^{(A)}$, we have that $C_0^{(A)} \succeq C_0$. Assume by induction that for all rounds $0 \leq r \leq i-1$, we have $C_r^{(A)} \succeq C_r$.
Consider any particle $p$. Since $\mathcal{S}$ is constructed using the exact set of movements $p$ executes according to $A$ and each time $p$ moves it decreases either its head distance or tail distance by $1$, it suffices to show that $p$ has performed at most as many movements in $\mathcal{S}$ up to $C_i$ as it has according to $A$ up to $C_i^{(A)}$.

If $p$ does not perform a movement between $C_{i-1}$ and $C_i$, we trivially have $C_i^{(A)}(p) \succeq C_i(p)$.
Otherwise, $p$ performs movement $m_{i-1}(p)$ to obtain $C_i$ from $C_{i-1}$. If $p$ has already performed $m_{i-1}(p)$ according to $A$ before reaching $C_{i-1}^{(A)}$, then clearly $C_i^{(A)}(p) \succeq C_i(p)$. Otherwise, $m_{i-1}(p)$ must be the next movement $p$ is to perform according to $A$, since $p$ has performed the same sequence of movements in the asynchronous execution as it has in $\mathcal{S}$ up to the respective rounds $i-1$, and thus has equal head and tail distances in $C_{i-1}$ and $C_{i-1}^{(A)}$.
It remains to show that $p$ can indeed perform $m_{i-1}(p)$ between $C_{i-1}^{(A)}$ and $C_i^{(A)}$.
If $m_{i-1}(p)$ is a sole expansion, then $p$ is the super-root of its tree (in both $C_{i-1}$ and $C_{i-1}^{(A)}$) and must also be able to expand in $C_{i-1}^{(A)}$. Similarly, if $m_{i-1}(p)$ is a sole contraction, then $p$ has no children (in both $C_{i-1}$ and $C_{i-1}^{(A)}$) and must be able to contract in $C_{i-1}^{(A)}$.
If $m_{i-1}(p)$ is a handover expansion with its parent $q$, then $q$ must be expanded in $C_{i-1}$. Parent $q$ must also be expanded in $C_{i-1}^{(A)}$; otherwise $d_h(q, C_{i-1}^{(A)}) > d_h(q, C_{i-1})$, contradicting the induction hypothesis. An analogous argument holds if $m_{i-1}(p)$ is a handover contraction with one of its contracted children. Therefore, in any case we have $C_i^{(A)}(p) \succeq C_i(p)$, and since the choice of $p$ was arbitrary, $C_i^{(A)} \succeq C_i$.
\qed
\end{proof}

We can show a similar dominance result when considering complaint flags.

\begin{definition}
\label{defn:complaintparallelschedule}
A movement schedule $(C_0, \ldots, C_t)$ is called a \emph{complaint-based parallel schedule} if each $C_i$ is a valid configuration of a particle system in which every particle holds at most \emph{one} complaint flag (rather than two, as described in Algorithm~\ref{alg:complaint}) and for every $i \geq 0$, $C_{i+1}$ is reached from $C_i$ such that for every particle $p$ one of the following properties holds:
\begin{enumerate}
\item $p$ does not hold a complaint flag and property 1, 3, or 4 of Definition~\ref{defn:parallelschedule} holds,
\item $p$ holds a complaint flag $f$ and expands into an adjacent node that was empty in $C_i$, consuming $f$,
\item $p$ forwards a complaint flag $f$ to a neighboring particle $p'$ which either does not hold a complaint flag in $C_i$ or is also forwarding its complaint flag.
\end{enumerate}
\end{definition}

A \emph{complaint-based forest schedule} $\mathcal{S} = (A, (C_0, \ldots, C_t))$ has the same properties as a forest schedule, with the exception that $(C_0, \ldots, C_t)$ is a complaint-based parallel schedule as opposed to a parallel schedule.
A complaint-based forest schedule is said to be \emph{greedy} if all particles perform movements according to Definition~\ref{defn:complaintparallelschedule} in the direction of their unique paths whenever possible.

We can now extend the dominance argument to hold with respect to \emph{complaint distance} in addition to head and tail distances. For any particle $p$ holding a complaint flag $f$ in configuration $C$, we define its complaint distance $d_c(f,C)$ to be the number of edges along $P_p$ from the node $p$ occupies to the end of $P_p$.
For any two configurations $C$ and $C'$ and any complaint flag $f$, we say that \emph{$C$ dominates $C'$ w.r.t.~$f$}, denoted $C(f) \succeq C'(f)$, if and only if $d_c(f,C) \leq d_c(f,C')$. Extending the previous notion of dominance, we say that \emph{$C$ dominates $C'$}, denoted $C \succeq C'$, if and only if $C$ dominates $C'$ with respect to every particle and with respect to every complaint flag.

It is also possible to construct a greedy complaint-based forest schedule whose configurations are dominated by their asynchronous counterparts, as we did for greedy forest schedules in Lemma~\ref{lem:forestdom}. Many of the details are the same, so as to avoid redundancy we highlight the differences here.
The most obvious difference is the inclusion of complaint flags. Definition~\ref{defn:complaintparallelschedule} restricts particles to holding at most one complaint flag at a time, where Algorithm~\ref{alg:complaint} allows a capacity of two. This allows the asynchronous execution to not ``fall behind'' the parallel schedule in terms of forwarding complaint flags. Basically, Definition~\ref{defn:complaintparallelschedule} allows a particle $p$ holding a complaint flag $f$ in the parallel schedule to forward $f$ to its parent $q$ even if $q$ currently holds its own complaint flag, so long as $q$ is also forwarding its flag at this time. The asynchronous execution does not have this luxury of synchronized actions, so the mechanism of buffering up to two complaint flags at a time allows it to ``mimic'' the pipelining of forwarding complaint flags that is possible within one round of a complaint-based parallel schedule.

Another slight difference is that a contracted particle cannot expand into an empty adjacent node unless it holds a complaint flag to consume. However, this restriction reflects Algorithm~\ref{alg:boundaryDirectionAlgorithm}, so once again the greedy complaint-based forest schedule can be constructed directly from the movements taken in the asynchronous execution. Moreover, since this restriction can only cause a contracted particle to remain contracted, the conditions of Lemma~\ref{lem:expparentchild} are still upheld.
Thus, we obtain the following lemma:

\begin{lemma}
\label{lem:flagforestdom}
Given any fair asynchronous activation sequence $A$ which begins at an initial configuration $C_0^{(A)}$ in which every expanded parent has at least one contracted child, there is a greedy complaint-based forest schedule $\mathcal{S} = (A,(C_0, \ldots, C_t))$ with $C_0 = C_0^{(A)}$ such that $C_i^{(A)} \succeq C_i$ for all $0 \leq i \leq t$.
\end{lemma}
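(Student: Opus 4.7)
The plan is to mirror the two-part structure used for Lemma~\ref{lem:forestdom}: first construct a greedy complaint-based forest schedule $\mathcal{S}$ from the asynchronous sequence $A$ by greedily selecting a mutually compatible subset of the ``next'' moves/forwardings at each step, then argue by induction on $i$ that $C_i^{(A)} \succeq C_i$ with respect to head, tail, and complaint distances. The movement portion of the construction (sole expansions, sole contractions, and handovers) is essentially identical to the proof of Lemma~\ref{lem:forestdom}, since Lemma~\ref{lem:expparentchild} still applies: the extra restriction that a contracted super-root can only expand while holding a complaint flag is monotone in the right direction, and only ever keeps a particle contracted, so it does not break the invariant that every expanded parent has a contracted child.

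The new ingredient is bookkeeping for complaint flags. For each flag $f$ I would define, in analogy with $M(p)$, the sequence of forwardings $f$ undergoes in $A$ along a fixed sub-path of $P_p$ for each holder $p$, and then extend the greedy-construction claim: at each step $C_{i-1} \to C_i$ of $\mathcal{S}$, we select a mutually compatible subset of pending moves \emph{and} pending forwardings. Compatibility of forwardings is governed by clause~3 of Definition~\ref{defn:complaintparallelschedule}, which permits a particle to receive a flag even if it currently holds one, provided it is simultaneously forwarding its own. The core case analysis then mirrors that of Lemma~\ref{lem:forestdom}, adding a fifth case in which the scheduled action of $p$ is to forward a flag $f$ to a neighbor $q$; using the induction hypothesis $C_{i-1}^{(A)} \succeq C_{i-1}$, the flag $f$ must already be held by $p$ (or by an ancestor no farther along $P_p$) in $C_{i-1}^{(A)}$, so Algorithm~\ref{alg:complaint} can indeed forward it at that moment, recalling that the asynchronous execution permits a capacity of \emph{two} flags per particle.

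The step I expect to be the main obstacle is precisely this last point: showing that the asynchronous execution can ``keep up'' with a chain of simultaneous forwardings performed in a single step of $\mathcal{S}$. In one parallel step, an entire chain of particles $p_1 \to p_2 \to \cdots \to p_k$ might all shift their flags one hop toward the root; asynchronously this is a sequence of single forwardings, and if they were executed out of order a receiver could temporarily hold two flags. The two-flag buffer in Algorithm~\ref{alg:complaint} is exactly what is needed to accommodate this: no matter in which order the activations in a round occur, every forwarding that $\mathcal{S}$ performs from $C_{i-1}$ to $C_i$ can be carried out in $A$ during asynchronous round $i$ without ever requiring a particle to hold more than two flags. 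The verification amounts to checking that the number of flags residing at any particle in the asynchronous execution never exceeds one plus its count in the schedule, which follows from the induction hypothesis $C_{i-1}^{(A)} \succeq C_{i-1}$ applied flag by flag.

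Finally, for the dominance conclusion, each particle $p$ has performed in $A$ up to round $i$ at least as many movements along $P_p$ as in $\mathcal{S}$ up to $C_i$ (by exactly the argument of Lemma~\ref{lem:forestdom}, with the super-root sole-expansion subcase now additionally requiring a flag, which is available by the flag-dominance hypothesis), so $d_h$ and $d_t$ dominate as before; and for each flag $f$, it has been forwarded in $A$ at least as many times as in $\mathcal{S}$, so $d_c(f, C_i^{(A)}) \leq d_c(f, C_i)$. Combining these gives $C_i^{(A)} \succeq C_i$ in the extended sense, completing the induction. \qed
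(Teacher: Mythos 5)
Your proposal is correct and follows essentially the same route as the paper: the paper likewise treats this lemma as a modification of Lemma~\ref{lem:forestdom}, pointing out exactly the two differences you identify --- that the one-flag restriction in Definition~\ref{defn:complaintparallelschedule} versus the two-flag buffer in Algorithm~\ref{alg:complaint} is what lets the asynchronous execution mimic the pipelined chain of forwardings in a single parallel step, and that the requirement of holding a flag before a sole expansion only keeps particles contracted and so preserves the hypothesis of Lemma~\ref{lem:expparentchild}. Your sketch is if anything slightly more explicit (the flag-by-flag induction and the extra case in the greedy-construction claim) than the paper's own informal discussion.
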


By Lemmas~\ref{lem:forestdom} and~\ref{lem:flagforestdom}, once we have an upper bound for the time it takes a greedy forest schedule to reach a final configuration, we also have an upper bound for the number of rounds required by the asynchronous execution. Hence, the remainder of our proofs will serve to upper bound the number of parallel rounds any greedy forest schedule would require to solve the coating problem for a given valid instance $(P,O)$, where $|P| = n$. Let $\mathcal{S}^* = (A, (C_0, \ldots, C_f))$ be such a greedy forest schedule, where $C_0$ is the initial configuration of the particle system $P$ (of all contracted particles) and $C_f$ is the final coating configuration.

In Sections~\ref{sec:firstlayer} and~\ref{sec:higherlayers}, we will upper bound the number of parallel rounds required by $\mathcal{S}^*$ in the worst case to coat the first and higher layers, respectively. More specifically, we will bound the worst-case time it takes to complete a layer $i$ once layers $1, \ldots, i-1$ have been completed.
For convenience, we will not differentiate between complaint-based and regular forest schedules in the following sections, since the same dominance result holds whether or not complaint flags are considered.
To prove these bounds, we need one last definition: a \emph{forest--path schedule} $\mathcal{S} = (A, (C_0, \ldots, C_t), L)$ is a forest schedule $(A,(C_0, \ldots, C_t))$ with the property that all the trees of $A(C_0)$ are rooted at a path $L = v_1v_2\cdots v_\ell \subseteq \Geqt$, and each particle $p$ must traverse $L$ in the same direction.

\vspace{-.1in}
\subsection{First layer: complaint-based coating and leader election} \label{sec:firstlayer}
\vspace{-.05in}

Our algorithm must first organize the particles using the spanning forest primitive, whose runtime is easily bounded:

\begin{lemma}
Following the spanning forest primitive, the particles form a spanning forest within $\mathcal{O}(n)$ rounds.
\end{lemma}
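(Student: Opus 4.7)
The plan is to show that in every round with at least one idle particle, the number of idle particles strictly decreases, so that all $n$ particles become root or follower (i.e.\ enter the spanning forest) within $\mathcal{O}(n)$ rounds.

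First I would record two easy properties of Algorithm~\ref{alg:spanningForestAlgorithm}: (i) a particle never reverts from root or follower back to idle; and (ii) the object together with all non-idle particles always forms a connected subregion of $\Geqt$, since new followers attach to an existing root/follower neighbor and handovers preserve tree connectivity. Together with the initial connectivity of the instance (Property 2 of Section~\ref{sec:problem}) and the fact that idle particles do not move, these imply that as long as any idle particle exists, there is some idle particle $p$ adjacent at the start of the current round to a non-idle particle $q$ or to $O$ itself.

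The core of the argument is to show that when $p$ is activated during the round, it still has a non-idle neighbor, and thus transitions to a follower (or root, if the adjacent non-idle particle is $O$ or a retired particle). The main obstacle is that $q$ may move between the start of the round and $p$'s activation. I would resolve this by inspecting the allowed movements: Algorithm~\ref{alg:spanningForestAlgorithm} only permits a root or follower to perform a sole contraction under the precondition ``no children and no idle neighbor'', so $q$ cannot sole-contract out of $p$'s neighborhood while $p$ is still idle. Any sole or handover expansion of $q$ leaves $q$ occupying its original node, preserving the adjacency; a handover contraction of $q$ hands the vacated node over to another non-idle particle (its parent or child). Hence $p$ retains some non-idle neighbor throughout the round, and becomes non-idle upon activation.

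It follows that at least one idle particle is absorbed into the spanning forest each round, so after at most $n$ rounds no idle particle remains and every particle is a root or follower carrying a valid parent pointer. This yields the claimed $\mathcal{O}(n)$ bound.
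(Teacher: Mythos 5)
Your proof is correct and takes essentially the same approach as the paper's: in each round at least one idle particle adjacent to the object or to an active/retired particle becomes active, so all $n$ particles join the forest within $\mathcal{O}(n)$ rounds. You additionally verify a detail the paper leaves implicit---that the non-idle neighbor cannot move out of the idle particle's neighborhood before that particle is activated, since sole contractions are forbidden in the presence of an idle neighbor and handovers keep the vacated node occupied by an active particle---which is a sound and worthwhile refinement.
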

\begin{proof}
Initially all particles are idle. In each round any idle particle adjacent to the object, an active (follower or root) particle, or a retired particle becomes active. It then sets its parent flag if it is a follower, or becomes the root of a tree if it is adjacent to the object or a retired particle. In each round at least one particle becomes active, so --- given $n$ particles in the system --- it will take $\mathcal{O}(n)$ rounds in the worst case until all particles join the spanning forest.
\qed
\end{proof}

For ease of presentation, we assume that the particle system is of sufficient size to fill the first layer (i.e., $B_1 \leq n$; the proofs can easily be extended to handle the case when $B_1 > n$); we also assume that the root of a tree also generates a complaint flag upon its activation (this assumption does not hurt our argument since it only increases the number of the flags generated in the system).
Let $\mathcal{S}_1 = (A, (C_0, \ldots, C_{t_1}), L_1)$ be the greedy forest--path schedule where $(A, (C_0, \ldots, C_{t_1}))$ is a truncated version of $\mathcal{S}^*$, $C_{t_1}$ --- for $t_1 \leq f$ --- is the configuration in $\mathcal{S}^*$ in which layer $1$ becomes complete, and $L_1$ is the path of nodes in layer $1$.
The following lemma shows that the algorithm makes steady progress towards completing layer $1$.

\begin{lemma}
\label{lem:progresslayer1}
Consider a round $i$ of the greedy forest--path schedule $\mathcal{S}_1$, where $0 \leq i \leq t_1-2$. Then within the next two parallel rounds of $\mathcal{S}_1$, $(i)$ at least one complaint flag is consumed, $(ii)$ at least one more complaint flag reaches a particle in layer $1$, $(iii)$ all remaining complaint flags move one position closer to a super-root along $L_1$, or $(iv)$ layer $1$ is completely filled (possibly with some expanded particles).
\end{lemma}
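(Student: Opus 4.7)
The plan is to perform a case analysis on the behavior of super-roots and complaint flags during the two parallel rounds following $C_i$, using greediness to force progress and Lemma~\ref{lem:expparentchild} to handle expanded-super-root subcases.

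First I would dispose of case (iv) directly: if layer~$1$ becomes completely filled within the window, we are done. So I assume henceforth that layer~$1$ remains incomplete; by the definition of super-root, throughout the window there is then some super-root $s$ whose $s.dir$ points to an unoccupied node in $L_1$.

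Next I would split on whether any complaint flag is consumed during the window. If one is, case (i) holds. Otherwise I claim no super-root holds a complaint flag at any point in $C_i$ or $C_{i+1}$: a contracted super-root with a flag and an empty adjacent node must expand in the next round by greediness, consuming the flag; an expanded super-root holding a flag first performs a handover contraction (using a contracted child guaranteed by Lemma~\ref{lem:expparentchild}) and then expands and consumes the flag in the following round. Either path would consume a flag within the two-round window, contradicting the assumption.

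Under the hypothesis that no super-root ever holds a flag during the window, I would argue that every remaining complaint flag moves one step closer to its super-root in round $i+1$. The key mechanism is the pipelined forwarding of Definition~\ref{defn:complaintparallelschedule}: for any flag $f$ held by a particle $p$, trace the maximal chain of consecutive flag-holding particles along $p$'s predecessor path toward its super-root; this chain must terminate at a flag-free particle (at the latest, the super-root itself, which is flag-free by assumption). Greediness then forces every particle in the chain to forward its flag simultaneously, so $f$ advances one position along its path. If any flag crosses from a tree into $L_1$ during this cascade, case (ii) holds; otherwise all advances occur within $L_1$, yielding case (iii).

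The main obstacle I expect is rigorously justifying the pipelined forwarding cascade. Definition~\ref{defn:complaintparallelschedule} couples a particle's forward to the forward of its predecessor, so the validity of any individual forward depends on the entire chain, and one must carefully argue that this global compatibility is always resolvable and that greediness (not mere feasibility) compels every particle in the chain to forward at once. A secondary difficulty is coordinating flag forwarding with particle movements in the same round, since handovers may alter predecessor relations mid-cascade; here the forest--path structure of $\mathcal{S}_1$ and the particles' predetermined paths should be invoked to ensure that the forwarding targets remain consistent with the simultaneous movement actions.
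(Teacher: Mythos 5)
Your overall strategy is the paper's: dispose of $(iv)$, handle a flag-holding super-root exactly as the paper does (contracted: greedy expansion consumes the flag; expanded: a handover contraction with a contracted child guaranteed by Lemma~\ref{lem:expparentchild}, then an expansion that consumes the flag), and otherwise run the pipelined forwarding cascade of Definition~\ref{defn:complaintparallelschedule} to get $(ii)$ or $(iii)$. Your concern about justifying the cascade is resolved the same way the paper resolves it: the chain of consecutive flag-holders along the predecessor path is finite and terminates at a flag-free particle (ultimately a flag-free super-root), and property~3 of Definition~\ref{defn:complaintparallelschedule} explicitly licenses the simultaneous forwarding along such a chain, which greediness then compels.

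The one genuine problem is where you draw the line between $(ii)$ and $(iii)$. You split on whether some flag crosses from a tree into $L_1$ during the cascade and conclude $(iii)$ in the ``otherwise'' branch. But $(iii)$ asserts that \emph{all} remaining flags move closer to a super-root \emph{along $L_1$}; if some flags are still held by followers in the trees and none of them enters layer~1 in this window, your cascade only shows those flags advance along their tree paths, which is neither $(ii)$ nor $(iii)$ as stated. The paper instead splits on whether any flag is still held by a follower: if so, it exhibits the maximal chain of flag-holding layer-1 particles ending at a particle $p_z$ whose follower child holds a flag, and the simultaneous forwarding lets that child's flag enter layer~1, giving $(ii)$; only when every remaining flag already sits in layer~1 does it invoke $(iii)$. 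This distinction matters downstream: the counting argument in Lemma~\ref{lemma:filled} charges $(iii)$ at most $B_1$ times precisely because the flags are confined to $L_1$, so $(iii)$ cannot be weakened to ``flags advance somewhere in the forest'' without breaking the $\mathcal{O}(B_1)$ bound. Adopt the paper's case split (some flag is held by a follower versus all flags are in layer~1) and your argument goes through.
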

\begin{proof}
If layer 1 is filled, $(iv)$ is satisfied; otherwise, there exists at least one super-root in $A(C_i)$. We consider several cases:

\begin{case}
There exists a super-root $s$ in $A(C_i)$ which holds a complaint flag. If $s$ is contracted, then it can expand and consume its flag by the next round. Otherwise, consider the case when $s$ is expanded. If it has no children, then within the next two rounds it can contract and expand again, consuming its complaint flag; otherwise, by Lemma~\ref{lem:expparentchild}, $s$ must have a contracted child with which it can perform a handover to become contracted in $C_{i+1}$ and then expand and consume its complaint flag by $C_{i+2}$. In any case, $(i)$ is satisfied.
\end{case}

\begin{case}
No super-root in $A(C_i)$ holds a complaint flag and not all complaint flags have been moved from follower particles to particles in layer 1. Let $p_1, p_2, \ldots, p_z$ be a sequence of particles in layer 1 such that each particle holds a complaint flag, no follower child of any particle except $p_z$ holds a complaint flag, and no particles between the next super-root $s$ and $p_1$ hold complaint flags. Then, as each $p_i$ forwards its flag to $p_{i-1}$ according to Definition~\ref{defn:complaintparallelschedule}, the follower child of $p_z$ holding a flag is able to forward its flag to $p_z$, satisfying $(ii)$.
\end{case}

\begin{case}
No super-root in $A(C_i)$ holds a complaint flag and all remaining complaint flags are held by particles in layer 1. By Definition~\ref{defn:complaintparallelschedule}, since no preference needs to be given to flags entering layer 1, all remaining flags will move one position closer to a super-root in each round, satisfying $(iii)$.
\end{case}
\vspace{-0.27in}\qed
\end{proof}

We use Lemma~\ref{lem:progresslayer1} to show first that layer $1$ will be filled with particles (some possibly still expanded) in $\mathcal{O}(n)$ rounds. From that point on, in another $\mathcal{O}(n)$ rounds, one can guarantee that expanded particles in layer $1$ will each contract in a handover with a follower particle, and hence all particles in layer $1$  will be contracted, as we see in the following lemma:

\begin{lemma}
\label{lemma:filled}
After $\mathcal{O}(n)$ rounds, layer 1 must be filled with contracted particles.
\end{lemma}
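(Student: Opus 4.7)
The plan is to decompose the lemma into two sub-claims: (a) every node of layer $1$ becomes occupied by \emph{some} particle (possibly still expanded) within $\mathcal{O}(n)$ rounds, and (b) within a further $\mathcal{O}(n)$ rounds every such expanded particle in layer $1$ contracts, so that layer $1$ ends up filled entirely with contracted particles. Summing these two phases then yields the stated $\mathcal{O}(n)$ bound.

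For phase (a) I would work in the greedy forest--path schedule $\mathcal{S}_1$ and analyze the complaint flags in pipeline style. At most $n$ complaint flags are ever created, since each particle generates at most one upon becoming active, and every flag is at distance $\mathcal{O}(n)$ from the nearest super-root because both $L_1$ and any follower path down to $L_1$ have length $\mathcal{O}(n)$. By Lemma~\ref{lem:progresslayer1}, each pair of consecutive parallel rounds either fills layer $1$ (case (iv), in which case we are done) or makes monotone progress of one of three types: a flag is consumed at a super-root while filling a new node (case (i)), a flag enters layer $1$ from a follower (case (ii)), or all flags currently in layer $1$ shift one step closer to their super-root (case (iii)). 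Since cases (ii) and (iii) continuously feed flags toward the super-root(s) along paths of length $\mathcal{O}(n)$, while case (i) consumes them at a constant rate, a pipelining argument shows that the leading flag reaches a super-root within $\mathcal{O}(n)$ rounds and thereafter flags are consumed at a positive constant rate, so all $\mathcal{O}(n)$ flags are consumed---and thus layer $1$ is filled---within $\mathcal{O}(n)$ total rounds.

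For phase (b), once every node of layer $1$ is occupied, any expanded particle $p$ in layer $1$ has both its head and tail in layer $1$, and by Lemma~\ref{lem:expparentchild} it has at least one contracted child, which must then be a follower in layer $2$. Algorithm~\ref{alg:handover} enables $p$ to perform a handover contraction with this child, leaving $p$'s tail position occupied by the now-contracted follower in layer $1$. There are at most $n/2$ such expanded particles and distinct handovers act on disjoint sets of edges, so in a greedy forest schedule they proceed essentially in parallel; any residual serialization along $L_1$ contributes only an additional $\mathcal{O}(n)$ rounds.

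The main obstacle is the pipelining analysis in phase (a): a naive potential summing the distances of all flags to their super-roots gives only an $\mathcal{O}(n^2)$ bound, so the argument must either employ a two-component potential that tracks the leading flag's distance and the number of unconsumed flags separately, or directly carry out a throughput-and-latency analysis of the super-root queue. Careful bookkeeping will also be needed to ensure that case (iii) applies often enough even when some flags still lie outside layer $1$ and case (ii) is active, so that the pipeline never stalls long enough to break the $\mathcal{O}(n)$ bound.
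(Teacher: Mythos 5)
Your two-phase decomposition (fill layer $1$ with possibly-expanded particles, then contract them via handovers with followers) is exactly the paper's, and your phase (b) matches the paper's final step of $\mathcal{O}(B_1)$ additional rounds. The problem is phase (a), which you yourself flag as ``the main obstacle'': you propose a pipelining or two-component-potential argument but do not carry it out, so the core of the lemma remains open in your write-up. The missing idea is that no sophisticated potential is needed --- one can simply bound the \emph{multiplicity} of each case of Lemma~\ref{lem:progresslayer1} and derive a contradiction. Concretely, suppose layer $1$ is still not filled after $8B_1+2$ rounds, so every two rounds one of cases $(i)$--$(iii)$ holds. Case $(i)$ can occur at most $B_1$ times, because each flag consumption comes with a super-root expanding into a \emph{new} unoccupied node of layer $1$, and there are only $B_1$ such nodes. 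Case $(iii)$ can occur at most $B_1$ times, because it only applies once \emph{all} remaining flags sit in layer $1$, and it then advances \emph{every} flag simultaneously by one step along $L_1$, a path of length $B_1$; this simultaneity is precisely what defuses your worry about an $\mathcal{O}(n^2)$ total-distance potential. The remaining $4B_1+2$ rounds must then realize case $(ii)$ at least $2B_1+1$ times, i.e., $2B_1+1$ flags entered layer $1$ from followers; since each layer-$1$ particle holds at most one flag in the complaint-based parallel schedule, at least $B_1+1$ of those flags must have been consumed, forcing super-roots to expand into at least $B_1+1$ distinct unoccupied positions of layer $1$ --- a contradiction.

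A secondary caution on your phase (a) setup: you do not need the bound ``at most $n$ flags, each at distance $\mathcal{O}(n)$'' at all once you count case occurrences rather than flag-steps, and attempting the latency-and-throughput route would force you to handle stalls where the super-root queue is empty while flags are still in transit from deep in the follower trees --- exactly the bookkeeping you admit is delicate. The counting argument sidesteps this entirely because case $(ii)$ guarantees that some flag crosses into layer $1$ whenever flags remain outside it and no super-root currently holds one.
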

\begin{proof}
We first prove the following claim:

\begin{claim}
After $8B_1+2$ rounds of $\mathcal{S}$, layer $1$ must be filled with particles.
\end{claim}
\begin{proof}
Suppose to the contrary that after $8B_1 + 2$ rounds, layer 1 is not completely filled with particles. Then none of these rounds could have satisfied $(iv)$ of Lemma~\ref{lem:progresslayer1}, so one of $(i), (ii)$, or $(iii)$ must be satisfied every two rounds.
Case $(i)$ can be satisfied at most $B_1$ times (accounting for at most $2B_1$ rounds), since a super-root expands into an unoccupied position of layer 1 each time a complaint flag is consumed.
Case $(iii)$ can also be satisfied at most $B_1$ times (accounting for at most $2B_1$ rounds), since once all remaining complaint flags are in layer 1, every flag must reach a super-root in $B_1$ moves.
Thus, the remaining $8B_1+2 - 2B_1 - 2B_1 = 4B_2+2$ rounds must satisfy $(ii)$ $2B_1+1$ times, implying that $2B_1+1$ flags reached particles in layer 1 from follower children. But each particle can hold at most one complaint flag, so at least $B_1+1$ flags must have been consumed and the super-roots have collectively expanded into at least $B_1+1$ unoccupied positions, a contradiction.
\qed
\end{proof}

By the claim, it will take at most $8B_1+2$ rounds until layer $1$ is completely filled with particles (some possibly expanded). In at most another $B_1$ rounds, every expanded particle in layer $1$ will contract in a handover with a follower particle (since $B_1\leq n$), and hence all particles in layer $1$ will be contracted after $\mathcal{O}(B_1) = \mathcal{O}(n)$ rounds.
\qed
\end{proof}

Once layer $1$ is filled, the leader election primitive can proceed. The full description of the Universal Coating algorithm in~\cite{Derakhshandeh2017} uses a node-based version of the leader election algorithm in~\cite{Daymude2016} for this primitive. For consistency, we kept this description of the primitive in this paper as well. However, with high probability guarantees were not proven for the leader election algorithm in~\cite{Daymude2016}. In order to formally prove with high probability results on the runtime of the universal coating algorithm, we introduced a variant of our leader election protocol under the amoebot model which provably elects a leader in $\mathcal{O}(n)$ asynchronous rounds, w.h.p.\footnote{The updated leader election algorithm's runtime holds with high probability, but its correctness is guaranteed; see~\cite{Daymude2017} for details.}~\cite{Daymude2017}. This gives the following runtime bound.

\begin{lemma}
\label{lem:leaderelection}
A position of layer 1 will be elected as the leader position, and w.h.p.~this will occur within $\mathcal{O}(n)$ additional rounds.
\end{lemma}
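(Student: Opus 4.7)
The plan is to reduce the lemma to the main result of~\cite{Daymude2017}. By Lemma~\ref{lemma:filled}, after $\mathcal{O}(n)$ rounds all $B_1$ positions of layer~$1$ are occupied by contracted particles. From that moment on, the node-based leader election protocol of~\cite{Daymude2017} operates on the fully occupied ring of $B_1$ candidate positions in layer~$1$. Note that the primitive was already running during the complaint-based coating phase, but it is precluded from terminating until layer~$1$ is completely filled; hence the $\mathcal{O}(n)$ bound we need is on the time between layer~$1$ being filled and the protocol terminating.

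Once this precondition is in place, the argument has two parts. For existence of a leader, I would invoke the deterministic correctness guarantee from~\cite{Daymude2017}: the protocol always terminates with exactly one surviving candidate position, which becomes the leader position. For the runtime bound, I would invoke the probabilistic guarantee from~\cite{Daymude2017} that the protocol terminates within $\mathcal{O}(B_1)$ asynchronous rounds with high probability. Since $B_1 \leq n$ whenever layer~$1$ is filled (which is the only situation where we need the bound), this gives the desired $\mathcal{O}(n)$ additional rounds w.h.p.

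The main conceptual obstacle, which is encapsulated inside the result of~\cite{Daymude2017} rather than resolved in this proof, is that particles in layer~$1$ may still undergo handovers and other local interactions during leader election, so the protocol cannot rely on a fixed bijection between candidates and particles. The node-based reformulation is precisely what handles this: the candidacy of a layer-$1$ position is a property of the node, unchanged if the particle occupying it is swapped with a neighbor via a handover. Granting this design and the analysis of~\cite{Daymude2017}, the lemma follows by concatenating the $\mathcal{O}(n)$ ``fill'' phase from Lemma~\ref{lemma:filled} with the $\mathcal{O}(B_1) = \mathcal{O}(n)$ ``elect'' phase that executes once layer~$1$ is stably filled, and applying a union bound to the single high-probability event involved.
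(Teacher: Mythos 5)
Your proposal is correct and matches the paper's approach: the paper likewise offers no independent argument for this lemma, simply deferring both the correctness and the $\mathcal{O}(n)$-rounds-w.h.p.\ bound to the node-based leader election protocol of~\cite{Daymude2017}, applied once layer 1 is filled. Your additional remarks about the concurrency of the primitive with the coating phase and the node-based (rather than particle-based) candidacy are consistent with the paper's surrounding discussion.
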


Once a leader position has been elected and either no more followers exist (if $n \leq B_1$) or all positions are completely filled by contracted particles (which can be checked in an additional $\mathcal{O}(B_1)$ rounds), the particle currently occupying the leader position becomes the leader particle. Once a leader has emerged, the particles on layer $1$ retire, which takes $\mathcal{O}(B_1)$ further rounds. Together, we get:

\begin{corollary}
\label{cor:layer1}
The worst-case number of rounds for $\mathcal{S}^*$ to complete layer $1$ is $\mathcal{O}(n)$, w.h.p.
\end{corollary}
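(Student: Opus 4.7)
The plan is to chain together the three main runtime bounds already established in this subsection, observing that each component is at most $\mathcal{O}(n)$ and that only the leader election step contributes the ``with high probability'' qualifier. First I would invoke Lemma~\ref{lemma:filled} to conclude that after $\mathcal{O}(n)$ parallel rounds of $\mathcal{S}^*$ every position of layer $1$ is occupied by a contracted particle. Because $\mathcal{S}^*$ dominates the asynchronous execution (Lemmas~\ref{lem:forestdom} and~\ref{lem:flagforestdom}), the same bound carries over to the asynchronous schedule.

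Next I would apply Lemma~\ref{lem:leaderelection}: starting from the configuration in which layer~$1$ is full of contracted particles, an additional $\mathcal{O}(n)$ rounds suffice, w.h.p., for a node of layer $1$ to be elected as the leader position. This is the only step that is not deterministic, so the overall bound inherits the ``w.h.p.'' qualifier from here. The occupant of the elected position then becomes the leader particle once layer $1$ is verified to be entirely filled by contracted particles; since there are exactly $B_1 \leq n$ such positions and each can be checked in constant work by its occupant, this verification adds only $\mathcal{O}(B_1) = \mathcal{O}(n)$ further rounds.

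Finally I would account for the retirement phase on layer $1$. Once the leader retires and becomes a marker (via Algorithm~\ref{alg:retiredCondition}), the retired-condition clause causes the remaining contracted root particles on layer $1$ to retire in counter-clockwise order, propagating retirement one particle per parallel round along the ring of $B_1$ positions. Thus layer $1$ becomes complete (every node occupied by a contracted retired particle) within $\mathcal{O}(B_1) = \mathcal{O}(n)$ additional rounds.

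Summing the three contributions --- filling layer $1$ with contracted particles ($\mathcal{O}(n)$ deterministically), electing the leader position ($\mathcal{O}(n)$ w.h.p.), and retiring all particles on layer $1$ ($\mathcal{O}(n)$ deterministically) --- yields a total of $\mathcal{O}(n)$ parallel rounds for $\mathcal{S}^*$ to complete layer $1$, w.h.p. I do not anticipate any real obstacle here: the lemmas above have already done the technical work, and the corollary is essentially a bookkeeping step. The only subtlety to flag is to be careful that the ``$\mathcal{O}(n)$ additional rounds'' of Lemma~\ref{lem:leaderelection} are measured starting from a configuration in which layer $1$ is entirely filled with contracted particles, which is exactly the output guaranteed by Lemma~\ref{lemma:filled}; the two bounds therefore compose cleanly.
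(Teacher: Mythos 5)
Your proposal is correct and follows essentially the same route as the paper, which obtains the corollary by summing the $\mathcal{O}(n)$ bound of Lemma~\ref{lemma:filled}, the $\mathcal{O}(n)$ w.h.p.\ bound of Lemma~\ref{lem:leaderelection}, and the $\mathcal{O}(B_1)$ rounds for verifying completeness and propagating retirement around layer~$1$. The only inessential addition is your appeal to the dominance lemmas, which is not needed here since the corollary is stated for $\mathcal{S}^*$ itself; that transfer is deferred to Theorem~\ref{thm:chain}.
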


\vspace{-.1in}
\subsection{Higher layers} \label{sec:higherlayers}
\vspace{-.05in}

We again use the dominance results we proved in Section~\ref{sec:parallel} to focus on parallel schedules when proving an upper bound on the worst-case number of rounds --- denoted by $Layer(i)$ --- for building layer $i$ once layer $i-1$ is complete, for $2 \leq i \leq N$. The following lemma provides a more general result which we can use for this purpose.

\begin{lemma}
\label{lem:tree-time}
Consider any greedy forest--path schedule $\mathcal{S} = (A, (C_0, \ldots, C_t), L)$ with $L = v_1v_2\cdots v_\ell$ and any $k$ such that $1 \leq k \leq \ell$. If every expanded parent in $C_0$ has at least one contracted child, then in at most $2(\ell + k)$ configurations, nodes $v_{\ell -k+1}\cdots v_\ell$ will be occupied by contracted particles.
\end{lemma}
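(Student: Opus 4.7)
The plan is to prove Lemma~\ref{lem:tree-time} by induction on $k$, leveraging Lemma~\ref{lem:expparentchild} to guarantee that the greedy schedule keeps making steady progress because expanded parents always have contracted children with which to perform handovers. For each $j \in \{1, \ldots, k\}$, let $p_j$ denote the particle whose path $P_{p_j}$ ends at $v_{\ell-j+1}$. The goal is to show by induction that $p_j$ is contracted at $v_{\ell-j+1}$ by configuration $C_{2(\ell+j)}$.

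For the base case $k=1$, I would focus on the frontmost particle $p_1$. Since its head initially lies on $L$ and $P_{p_1}$ ends at $v_\ell$, we have $d_h(p_1,C_0) + d_t(p_1,C_0) \leq 2\ell - 1$. Since every movement (sole or handover expansion, sole or handover contraction) decreases either $d_h$ or $d_t$ by one, it suffices to argue that $p_1$ performs a movement in each configuration transition. By greedy movement this is immediate unless $p_1$ is blocked: if $p_1$ is contracted it either expands solely or performs a handover expansion with an expanded parent; if $p_1$ is expanded with no children it contracts solely, while if it has children Lemma~\ref{lem:expparentchild} gives it a contracted child with which to handover-contract. Thus $d_h(p_1,\cdot)+d_t(p_1,\cdot)$ reaches zero within $2\ell -1$ rounds, so $p_1$ is contracted at $v_\ell$ by round $2\ell \leq 2(\ell+1)$.

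For the inductive step, assume $p_1,\ldots,p_{j-1}$ are contracted at $v_\ell,\ldots,v_{\ell-j+2}$ by round $2(\ell+j-1)$. The key claim I need is that at that moment, $p_j$'s head is within distance $1$ of $v_{\ell-j+1}$ along $P_{p_j}$, so that two more rounds (one expansion and one contraction, each enabled by the same case analysis as in the base case) suffice to place $p_j$ contracted at $v_{\ell-j+1}$. To establish this ``close-following'' property, I would strengthen the induction to also track the quantity $d_h(p_j, C_r) + d_t(p_j, C_r)$ and argue that $p_j$ performs a movement in every round as well. This follows because whenever $p_j$ is blocked from advancing, its predecessor along $L$ must also be stationary --- but the earlier settling of $p_1,\ldots,p_{j-1}$ guarantees they are either already retired at their targets or moving greedily one step ahead of $p_j$, and Lemma~\ref{lem:expparentchild} rules out the bad configuration where two consecutive expanded particles prevent each other from contracting. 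Hence over the $2(\ell+j-1)$ rounds elapsed, $p_j$'s head-plus-tail distance has dropped to at most $1$, placing $p_j$ within the required reach.

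The main obstacle is precisely the close-following claim: greedy forest--path schedules can in principle leave large gaps between consecutive particles, and one must show that any such gap cannot persist. The argument must rule out pathological situations in which $p_j$ is contracted with an expanded ancestor not moving, or where a merging of trees temporarily blocks an advancing super-root. Both cases are handled by combining greediness (every possible legal movement occurs) with the invariant of Lemma~\ref{lem:expparentchild} (no expanded particle is stuck behind an expanded child), together with the fact that the traversal direction along $L$ is shared by all particles, so that once $p_{j-1}$ has vacated a node in $L$ it is $p_j$ (and no competing particle) that fills it. Combining the base case and inductive step yields the desired bound of $2(\ell+k)$ configurations.
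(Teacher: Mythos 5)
Your overall skeleton matches the paper's: induction on $k$, a base case in which the frontmost particle advances one position every two configurations thanks to Lemma~\ref{lem:expparentchild}, and an inductive step charging two additional configurations per extra node. The base case is fine (and essentially identical to the paper's, which phrases it in terms of the super-root $s$ closest to $v_\ell$ alternating between a sole expansion and a handover contraction).

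The inductive step, however, has a genuine gap. You correctly identify the ``close-following'' property as the crux, but the route you propose --- strengthening the induction to show that $p_j$ performs a movement in \emph{every} round so that $d_h(p_j,\cdot)+d_t(p_j,\cdot)$ decreases each round --- is false as stated: a contracted particle whose predecessor is contracted and whose forward node is occupied performs no movement at all in that round, so $p_j$ can and does stall. Your fallback (``when $p_j$ is blocked its predecessor is also stationary, but the predecessors are settled or moving one step ahead'') is circular and does not recover the claim that $p_j$'s head-plus-tail distance has dropped to at most $1$ after $2(\ell+j-1)$ rounds. The paper avoids tracking $p_j$'s progress over time entirely. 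Its inductive step only uses a \emph{static} fact about the configuration $C_{2(\ell+k)-2}$ in which $v_{\ell-k+2}\cdots v_\ell$ are already settled: since handovers preserve parent--child adjacency and all particles traverse $L$ in the same direction, the particle $p$ destined for $v_{\ell-k+1}$ must by then occupy that node (with its head, if expanded). Then either $p$ is already contracted and done, or it is expanded and Lemma~\ref{lem:expparentchild} supplies a contracted child with which a single handover contracts $p$ onto $v_{\ell-k+1}$, using at most two further configurations. If you replace your per-round movement argument with this adjacency observation at the single configuration where the predecessor settles, your proof closes.
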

\begin{proof}
Let $s$ be the super-root closest to $v_\ell$, and suppose $s$ initially occupies node $v_i$ in $C_0$. Additionally, suppose there are at least $k$ active particles in $C_0$ (otherwise, we do not have sufficient particles to occupy $k$ nodes of $L$).
Argue by induction on $k$, the number of nodes in $L$ starting with $v_\ell$ which must be occupied by contracted particles.
First suppose that $k = 1$. By Lemma~\ref{lem:expparentchild}, every expanded parent has at least one contracted child in any configuration $C_j$, so $s$ is always able to either expand forward into an unoccupied node of $L$ if it is contracted or contract as part of a handover with one of its children if it is expanded. Thus, in at most $2(\ell + k) = 2\ell + 2$ configurations, $s$ has moved forward $\ell$ positions, is contracted, and occupies its final position $v_{\ell - k + 1} = v_\ell$.

Now suppose that $k > 1$ and that each node $v_{\ell - x + 1}$, for $1 \leq x \leq k-1$, becomes occupied by a contracted particle in at most $2(\ell + k - 1) = 2(\ell + k) - 2$ configurations. It suffices to show that $v_{\ell - k + 1}$ also becomes occupied by a contracted particle in at most two additional configurations. Let $p$ be the particle currently occupying $v_{\ell - k + 1}$ (such a particle must exist since we supposed we had sufficient particles to occupy $k$ nodes and $\mathcal{S}$ ensures the particles follow this unique path).
If $p$ is contracted in $C_{2(\ell + k) - 2}$, then it remains contracted and occupying $v_{\ell - k + 1}$, so we are done.
Otherwise, if $p$ is expanded, it has a contracted child $q$ by Lemma~\ref{lem:expparentchild}. Particles $p$ and $q$ thus perform a handover in which $p$ contracts to occupy only $v_{\ell - k + 1}$ at $C_{2(\ell + k) - 1}$, proving the claim.
\qed
\end{proof}

For convenience, we introduce some additional notation. Let $n_i$ denote the number of particles of the system that will not belong to layers 1 through $i-1$, i.e., $n_i=n-\sum_{j=1}^{i-1}B_j$, and let $t_i$ (resp., $C_{t_i}$) be the round (resp., configuration) in which layer $i$ becomes complete.

When coating some layer $i$, each root particle either moves either $(a)$ through the nodes in layer $i$ in the set direction $dir$ (CW or CCW) for layer $i$, or $(b)$ through the nodes in layer $i+1$ in the opposite direction over the already retired particles in layer $i$ until it finds an empty position in layer $i$. We bound the worst-case scenario for these two movements independently in order to get a an upper bound on $Layer(i)$. Let $L_i = v_1, \ldots, v_{B_i}$ be the path of nodes in layer $i$ listed in the order that they appear from the marker position $v_1$ following direction $dir$, and let $\mathcal{S}_i = (A, (C_{t_{i-1}+1}, \ldots, C_{t_i}), L_i)$ be the greedy forest--path schedule where $(A, (C_{t_{i-1}+1}, \ldots, C_{t_i}))$ is a section of $\mathcal{S}^*$. By Lemma~\ref{lem:tree-time}, it would take $\mathcal{O}(B_i)$ rounds for all $(a)$ movements to complete; an analogous argument shows that all $(b)$ movements complete in $\mathcal{O}(B_{i+1}) = \mathcal{O}(B_i)$ rounds.
This implies the following lemma:

\begin{lemma}
\label{lem:layeri2}
Starting from configuration $C_{t_{i-1} + 1}$, the worst-case additional number of rounds for layer $i$ to become complete is $\mathcal{O}(B_i)$.
\end{lemma}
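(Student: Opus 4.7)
The plan is to apply Lemma~\ref{lem:tree-time} separately to the two types of root movements described immediately above the statement, and combine the bounds. Before invoking the lemma, I would verify its hypothesis at configuration $C_{t_{i-1}+1}$, namely that every expanded parent has a contracted child. This property holds trivially in $C_0$ because every particle starts contracted, and Lemma~\ref{lem:expparentchild}, applied along $\mathcal{S}^*$, preserves it through every subsequent configuration, so in particular it holds at $C_{t_{i-1}+1}$.

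For the type-$(a)$ movements, all such roots share the common path $L_i = v_1 v_2 \cdots v_{B_i}$ (from the marker position in direction $dir$) and traverse it the same way, so the restriction of $\mathcal{S}^*$ to the interval $[t_{i-1}+1,\, t_i]$ fits the forest--path schedule template with $L = L_i$. Applying Lemma~\ref{lem:tree-time} with $\ell = k = B_i$ yields $2(B_i + B_i) = \mathcal{O}(B_i)$ parallel configurations until every node of $L_i$ is occupied by a contracted particle. For type-$(b)$ movements, I would set up an analogous forest--path schedule on $L_{i+1}$ traversed in the opposite direction: each such root walks along the top of the (already retired) layer $i$ until it either descends into a gap in layer $i$ (thereby turning into a type-$(a)$ mover) or hits a retired particle of layer $i+1$. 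Because $\Geqt$ is the triangular grid, $B_{i+1} = B_i + \mathcal{O}(1)$, so Lemma~\ref{lem:tree-time} bounds this phase by $\mathcal{O}(B_{i+1}) = \mathcal{O}(B_i)$ rounds.

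Finally, I would add the $\mathcal{O}(B_i)$ rounds needed for the retirement cascade that fires after layer $i$ fills: once the marker particle in layer $i$ retires, the condition ``retired particle in direction $p.dir$'' propagates around the ring, retiring each contracted root in at most $B_i$ additional rounds. Summing the three $\mathcal{O}(B_i)$ contributions gives the claimed bound. The main obstacle will be making precise the hand-off between type-$(b)$ and type-$(a)$ movements for a particle that descends from layer $i+1$ into layer $i$, and arguing that their concurrent execution does not inflate the aggregate count; I would handle this by concatenating the two directed paths into a single composite path of length $B_i + B_{i+1} = \mathcal{O}(B_i)$, observing that every root still traverses this composite path in one consistent direction, and then invoking Lemma~\ref{lem:tree-time} once with $\ell, k = \mathcal{O}(B_i)$.
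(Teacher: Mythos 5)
Your proposal follows essentially the same route as the paper: the same decomposition into type-$(a)$ movements along $L_i$ and type-$(b)$ movements along layer $i+1$, each bounded by $\mathcal{O}(B_i)$ via Lemma~\ref{lem:tree-time} together with $B_{i+1} = \mathcal{O}(B_i)$. Your additions --- explicitly verifying the expanded-parent/contracted-child hypothesis at $C_{t_{i-1}+1}$, accounting for the $\mathcal{O}(B_i)$-round retirement cascade needed for layer $i$ to be \emph{complete} (retired, not just filled), and the composite-path device for the $(b)$-to-$(a)$ hand-off that the paper dismisses as ``analogous'' --- are refinements of, not departures from, the paper's argument.
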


Putting it all together, for layers $2$ through $N$:

\begin{corollary}
\label{cor:higherlayers}
The worst-case number of rounds for $\mathcal{S}^*$ to coat layers 2 through $N$ is $\mathcal{O}(n)$.
\end{corollary}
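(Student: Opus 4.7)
My plan is to sum the per-layer bound from Lemma~\ref{lem:layeri2} across all layers $i = 2, 3, \ldots, N$, and then show that the resulting sum is $\mathcal{O}(n)$. Since Lemma~\ref{lem:layeri2} counts rounds starting from $C_{t_{i-1}+1}$, i.e., the configuration right after layer $i-1$ becomes complete, a telescoping over the intervals $[t_{i-1}+1, t_i]$ immediately gives that $\mathcal{S}^*$ takes at most $\sum_{i=2}^{N} \mathcal{O}(B_i)$ rounds to progress from $C_{t_1}$ to $C_{t_N}$, with a negligible $\mathcal{O}(N) \leq \mathcal{O}(n)$ overhead for the single-round transitions between layers.

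The summation decomposes naturally into inner layers and the outermost layer. For the inner layers, the definition of $N$ guarantees that layers $1$ through $N-1$ are completely filled with particles, so $\sum_{j=1}^{N-1} B_j \leq n$, and in particular $\sum_{i=2}^{N-1} B_i \leq n$. This absorbs all terms of the sum except $B_N$.

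For the final layer, I would bound $B_N$ by a geometric observation about $\Geqt$: since successive layers are offsets of a tunnel-free object in the triangular grid, consecutive layer sizes satisfy $B_{i+1} \leq B_i + c$ for some absolute constant $c$ (each new layer can add only a constant amount to the perimeter because $\Geqt$ has only six directions). Applying this at $i = N-1$ yields $B_N \leq B_{N-1} + c \leq n + c$, so $\sum_{i=2}^{N} B_i \leq 2n + c = \mathcal{O}(n)$, which combined with the first paragraph proves the corollary.

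The main obstacle is justifying this bound on $B_N$; the remaining steps are a routine telescoping sum. The geometric fact that $B_{i+1} - B_i = \mathcal{O}(1)$ is intuitively clear but relies crucially on the non-tunnel validity condition (item 4 of Section~\ref{sec:problem}), which rules out pathological shapes where the perimeter could grow abruptly from one layer to the next. A cleaner alternative would be to refine Lemma~\ref{lem:layeri2} for the terminal layer to read $\mathcal{O}(n_N)$ rather than $\mathcal{O}(B_N)$ — since the super-root on layer $N$ need only traverse as far as the last particle's final position, and Lemma~\ref{lem:tree-time} is applied with $k = n_N$ there — and then invoke $n_N \leq n$ directly, sidestepping the geometric argument altogether.
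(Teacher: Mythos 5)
Your proposal is correct and takes essentially the same route as the paper, whose entire proof is the telescoping sum $\sum_{i=2}^{N}Layer(i) \leq c\sum_{i=2}^{N}B_i = \Theta(n)$ obtained from Lemma~\ref{lem:layeri2}. You are in fact more careful than the paper on the one nontrivial point: the paper asserts $\sum_{i=2}^{N}B_i = \Theta(n)$ without justifying that $B_N = \mathcal{O}(n)$ (only $\sum_{j=1}^{N-1}B_j < n$ follows from the definition of $N$), whereas you supply both the geometric bound $B_{i+1} \leq B_i + \mathcal{O}(1)$ and the cleaner alternative of bounding the final layer by $\mathcal{O}(n_N) \leq \mathcal{O}(n)$ --- either closes the gap.
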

\begin{proof}
Starting from configuration $C_{t_1 + 1}$, it follows from Lemma~\ref{lem:layeri2} that the worst-case number of rounds for $\mathcal{S}^*$ to reach a legal coating of the object is upper bounded by
\[\sum_{i=2}^{N}Layer(i) \leq c\sum_{i=2}^{N}B_i = \Theta(n),\]
where $c>0$ is a constant.
\qed
\end{proof}

Combining Corollaries~\ref{cor:layer1} and~\ref{cor:higherlayers}, we get that $\mathcal{S}^*$ requires $\mathcal{O}(n)$ rounds w.h.p.~to coat any given valid object $O$ starting from any valid initial configuration of the set of particles $P$. By Lemmas~\ref{lem:forestdom} and~\ref{lem:flagforestdom}, the worst-case behavior of $\mathcal{S}^*$ is an upper bound for the runtime of our Universal Coating algorithm, so we conclude:

\begin{theorem}
\label{thm:chain}
The total number of asynchronous rounds required for the Universal Coating algorithm to reach a legal coating configuration, starting from an arbitrary valid instance $(P,O)$, is $\mathcal{O}(n)$ w.h.p., where $n$ is the number of particles in the system.
\end{theorem}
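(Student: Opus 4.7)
The plan is essentially to assemble the pieces already in hand, since most of the heavy lifting has been done by the dominance machinery of Section~\ref{sec:parallel} and the per-layer analyses in Sections~\ref{sec:firstlayer} and~\ref{sec:higherlayers}. First I would argue at the level of the greedy (complaint-based) forest schedule $\mathcal{S}^*$. By Corollary~\ref{cor:layer1}, $\mathcal{S}^*$ needs $\mathcal{O}(n)$ parallel configurations, w.h.p., to reach a configuration $C_{t_1}$ in which layer $1$ is completely filled with contracted retired particles and a leader has emerged. By Corollary~\ref{cor:higherlayers}, starting from $C_{t_1+1}$, the remaining layers $2,\ldots,N$ are completed in a further $\sum_{i=2}^{N}Layer(i) = \mathcal{O}\!\left(\sum_{i=2}^{N}B_i\right) = \mathcal{O}(n)$ parallel configurations. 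Summing the two bounds gives that $\mathcal{S}^*$ reaches a legal, stable configuration in $\mathcal{O}(n)$ parallel rounds, w.h.p.

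Next I would transfer this parallel-schedule bound to the actual asynchronous execution. This is exactly what Lemmas~\ref{lem:forestdom} and~\ref{lem:flagforestdom} were set up for: for any fair asynchronous activation sequence $A$, there exists a greedy (complaint-based) forest schedule whose $i$-th configuration is dominated by $C_i^{(A)}$ for every $i$. Dominance in terms of head, tail, and complaint distances along the unique paths $P_p$ means that whenever the parallel schedule has a particle (or a complaint flag) reach the end of its path, the asynchronous execution has reached (or passed) the same point by the same round index. In particular, once $\mathcal{S}^*$ has reached the final coating configuration $C_f$, the asynchronous execution has also reached a legal coating, so the round count transfers directly.

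A small subtlety I want to be careful about is the hypothesis of Lemmas~\ref{lem:forestdom} and~\ref{lem:flagforestdom} that every expanded parent in $C_0$ has at least one contracted child. In a valid instance, all particles start contracted and idle, so $A(C_0)$ is empty of expanded particles and the hypothesis holds trivially; by Lemma~\ref{lem:expparentchild}, it then persists throughout $\mathcal{S}^*$. I would also remark that the only source of the w.h.p.\ qualifier is the leader election primitive invoked inside Corollary~\ref{cor:layer1} (via Lemma~\ref{lem:leaderelection} and the analysis in~\cite{Daymude2017}); all other round bounds in Sections~\ref{sec:firstlayer} and~\ref{sec:higherlayers} are deterministic worst-case bounds, so the final probability statement inherits its $1 - 1/n^c$ guarantee directly from the leader election analysis.

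The main obstacle, if any, is conceptual rather than technical: one must be sure that the dominance argument carries across the boundary between primitives, in particular across the point where layer~$1$ becomes stable and the general layering primitive takes over with a new set of roots and a new path $L_i$ for each subsequent layer. I would address this by noting that the proof of Corollary~\ref{cor:higherlayers} re-applies the forest--path schedule machinery independently on each $\mathcal{S}_i$, and that dominance holds configuration-by-configuration within each such segment; stitching the segments together then yields an $\mathcal{O}(n)$ bound on the total number of asynchronous rounds, completing the proof.
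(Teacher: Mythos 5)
Your proposal is correct and follows essentially the same route as the paper: combine Corollaries~\ref{cor:layer1} and~\ref{cor:higherlayers} to bound the greedy parallel schedule $\mathcal{S}^*$ by $\mathcal{O}(n)$ rounds w.h.p., then invoke the dominance results of Lemmas~\ref{lem:forestdom} and~\ref{lem:flagforestdom} to transfer that bound to any fair asynchronous execution. Your added remarks --- that the hypothesis on expanded parents holds trivially since all particles start contracted, and that the w.h.p.\ qualifier originates solely in the leader election step --- are accurate and consistent with the paper's argument.
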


\vspace{-.1in}
\section{Simulation Results} \label{sec:experimental}
\vspace{-.05in}
In this section we present a brief simulation-based analysis of our algorithm which shows that in practice our algorithm exhibits a better than linear average competitive ratio.
Since $\OPT(P,O)$ (as defined in Section~\ref{sec:performance}) is difficult to compute in general, we investigate the competitiveness with the help of an appropriate lower bound for $\OPT(P,O)$. Recall the definitions of the distances $d(p,q)$ and $d(p,U)$ for $p, q \in \Veqt$ and $U \subseteq \Veqt$. Consider any valid instance $(P,O)$. Let $\mathcal{L}$ be the set of all legal particle positions of $(P,O)$; that is, $\mathcal{L}$ contains all sets $U \subseteq \Veqt$ such that the positions in $U$ constitute a coating of the object $O$ by the particles in the system.

We compute a lower bound on $\OPT(P,O)$ as follows.
Consider any $U \in \mathcal{L}$, and let $G(P,U)$ denote the complete bipartite graph on partitions $P$ and $U$. For each edge $e = (p, u) \in P \times U$, set the cost of the edge to $w(e) = d(p,u)$. Every perfect matching in $G(P,U)$ corresponds to an assignment of the particles to positions in the coating. The maximum edge weight in a matching corresponds to the maximum distance a particle has to travel in order to take its place in the coating.
Let $M(P,U)$ be the set of all perfect matchings in $G(P,U)$. We define the \emph{matching dilation} of $(P,O)$ as
\[\text{MD}(P,O) = \min_{U \in \mathcal{L}} \left( \min_{M \in M(P,U)} \left( \max_{e \in M} \left( w(e) \right)\right)\right).\]
Since each particle has to move to some position in $U$ for some $U \in \mathcal{L}$ to solve the coating problem, we have $\OPT(P,O) \geq \text{MD}(P,O)$. The search for the matching that minimizes the maximum edge cost for a given $U \in \mathcal{L}$ can be realized efficiently by reducing it to a flow problem using edges up to a maximum cost of $c$ and performing binary search on $c$ to find the minimal $c$ such that a perfect matching exists.
We note that our lower bound is not tight. This is due to the fact that it only respects the distances that particles have to move but ignores the congestion that may arise, i.e., in certain instances the distances to the object might be very small, but all particles may have to traverse one ``chokepoint'' and thus block each other.

\begin{figure}
    \centering
    \includegraphics[width = \textwidth]{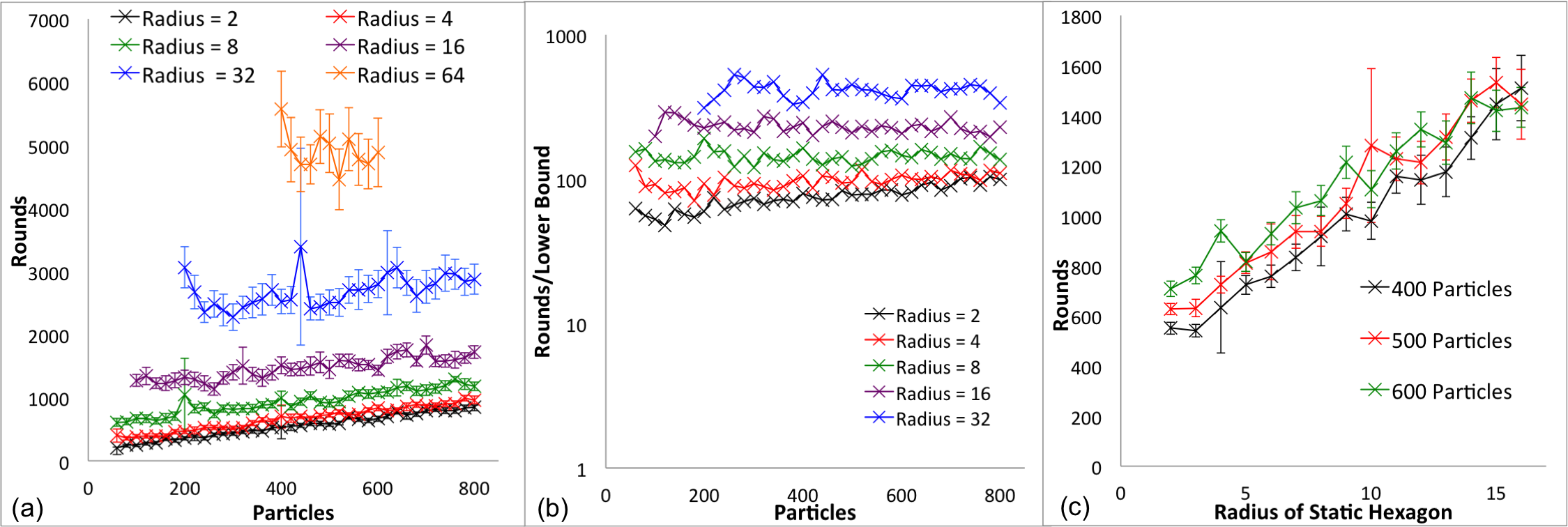}
    \caption[Simulation Results]{\small (a) shows the number of rounds varying the number of particles. (b) shows the ratio of number of rounds to the lower bound in log scale. (c) shows the number of rounds varying the static hexagon radius.}
    \vspace{-.15in}
    \label{fig:simresults}
\end{figure}

We implemented the Universal Coating algorithm in the amoebot simulator (see \cite{sops-webpage} for videos). For simplicity, each simulation is initialized with the object $O$ as a regular hexagon of object particles; this is reasonable since the particles need only know where their immediate neighbors in the object's border are relative to themselves, which can be determined independently of the shape of the border. The particle system $P$ is initialized as idle particles attached randomly around the hexagon's perimeter. The parameters that were varied between instances are the radius of the hexagon and the number of (initially idle) particles in $P$. Each experimental trial randomly generates a new initial configuration of the system.

Figure~\ref{fig:simresults}(a) shows the number of rounds needed to complete the coating with respect to the hexagon object radius and the number of particles in the system. The number of rounds plotted are averages over 20 instances of a given $|P|$ with 95\% confidence intervals. These results show that, in practice, the number of rounds required increases linearly with particle system size. This agrees with our expectations, since leader election depends only on the length of the object's surface while layering depends on the total number of particles.
Figure~\ref{fig:simresults}(b) shows the ratio of the number of rounds to the matching dilation of the system. These results indicate that, in experiment, the average competitive ratio of our algorithm may exhibit closer to logarithmic behaviors.
Figure~\ref{fig:simresults}(c) shows the number of rounds needed to complete the coating as the radius of the hexagon object is varied. The runtime of the algorithm appears to increase linearly with both the number of active particles and the size of the object being coated, and there is visibly increased runtime variability for systems with larger radii.

\vspace{-.1in}
\section{Conclusion} \label{sec:concl}
\vspace{-.05in}
This paper continued the study of universal coating in self-organizing particle systems. The runtime analysis shows that our Universal Coating algorithm terminates in a linear number of rounds with high probability, and thus is worst-case optimal. This, along with the linear lower bound on the competitive gap between local and global algorithms, further shows our algorithm to be competitively optimal. Furthermore, the simulation results indicate that the competitive ratio of our algorithm may be better than linear in practice. In the future, we would like to apply the algorithm and analysis to the case of bridging, in which particles create structures across gaps between disconnected objects.
We would also like to extend the algorithm to have self-stabilization capabilities, so that it could successfully complete coating without human intervention after occasional particle failures or outside interference.

\bibliographystyle{unsrt}
{\small \bibliography{literature}}

\end{document}